\newtheorem{thm}{Theorem}
\newtheorem{lem}{Lemma}
\newtheorem{prop}{Proposition}
\newtheorem{assum}{Assumption}
\newtheorem{rem}{Remark}
\DeclareMathOperator{\tr}{tr}
\DeclareMathOperator{\dom}{dom}  
\newcommand{\T}{^{\top}} 
\newcommand{\ie}{\textit{i.e.}}
\def\BibTeX{{\rm B\kern-.05em{\sc i\kern-.025em b}\kern-.08em
		T\kern-.1667em\lower.7ex\hbox{E}\kern-.125emX}}
\begin{document}
	
	\title{Nonlinear Attitude Estimation Using Intermittent and Multi-Rate Vector Measurements}
	\author{
        \href{https://orcid.org/0000-0001-7498-5162}{\includegraphics[scale=0.06]{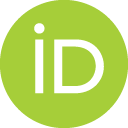}\hspace{0.5mm}} Miaomiao Wang \\
		Lakehead University\\
		Thunder Bay, ON P7B 5E1, Canada \\
		\texttt{mwang8@lakeheadu.ca}		
		\And			 	
		\href{https://orcid.org/0000-0002-9049-4651}{\includegraphics[scale=0.06]{orcid.png}\hspace{0.5mm}} Abdelhamid Tayebi \\
		Lakehead University\\
		Thunder Bay, ON P7B 5E1, Canada \\
		{\tt atayebi@lakeheadu.ca} 
	}%
\twocolumn[ 
\begin{@twocolumnfalse}
\maketitle 
	
	\begin{abstract}
		This paper considers the problem of nonlinear attitude estimation for a rigid body system using intermittent and multi-rate inertial vector measurements as well as continuous (high-rate) angular velocity measurements. Two types of hybrid attitude observers on Lie group $SO(3)$  are proposed. First, we propose a hybrid attitude observer where almost global asymptotic stability is guaranteed using the notion of almost global input-to-state stability on manifolds. Thereafter, this hybrid attitude observer is extended by introducing a switching mechanism to achieve global asymptotic stability. Both simulation and experimental results are presented to illustrate the performance of the proposed hybrid observers.
	\end{abstract}
	
	\begin{IEEEkeywords}
		Attitude estimation, Lie group $SO(3)$, intermittent measurements, multi-rate measurements, hybrid observers
	\end{IEEEkeywords}
\end{@twocolumnfalse}]
\copyrightfootnote{This work has been accepted for publication in IEEE TAC.}

	\section{Introduction}
	The algorithms used for the determination of the attitude (or orientation) of a rigid body system are instrumental in many applications related to robotics, aerospace and marine engineering. Since the attitude is not directly measurable from any sensor, it can be obtained through the integration of the angular velocity or determined from body-frame observations of at least two non-collinear vectors known in the inertial frame. The latter, known as Wahba's problem \cite{Wahba1965}, relies on vector observations obtained from different types of sensors such as low-cost inertial measurement unit (IMU) sensors (including an accelerometer, a gyroscope and a magnetometer), or sophisticated sensors such as sun sensors and star trackers. However, although simple, both approaches do not perform well in the presence of measurement bias and noise. This motivated the use of Kalman-type filters leading to dynamic attitude estimation algorithms (see the survey paper \cite{Crassidis2007survey,shuster1979}). Although successfully implemented in many practical applications, these Kalman-based dynamic estimation techniques rely on local linearizations/approximations and lack stability guarantees in the global sense.

	Recently, a class of geometric nonlinear attitude observers have made their appearances in the literature, for instance, nonlinear complementary filters on the Special Orthogonal group $SO(3)$ \cite{mahony2008nonlinear,bonnabel2008symmetry,hua2013implementation ,Berkane_Tay_TAC2017} and the invariant extended Kalman filter \cite{barrau2015intrinsic}. Unlike the classical attitude observers/filters, the geometric observers take into account the topological properties of the group $SO(3)$ and can provide almost global asymptotic stability (AGAS) guarantees, \ie, the estimated attitude converges asymptotically to the actual one from almost all initial conditions except from a set of zero Lebesgue measure. It is important to point out that $SO(3)$ is a compact odd-dimensional manifold without boundary, which is not homeomorphic to the vector space $\mathbb{R}^n$, and as such, there is no smooth vector field with a global attractor on $SO(3)$. Therefore,  the strongest result one can achieve via these time-invariant smooth observers is AGAS (see, for instance \cite{Koditschek88}). Motivated by the work in  \cite{mayhew2011hybrid,mayhew2011synergistic,mayhew2013synergistic} and the framework of hybrid dynamical systems in \cite{goebel2009hybrid,goebel2012hybrid}, hybrid attitude observers with global asymptotic stability guarantees have been proposed in \cite{wu2015globally,berkane2017hybrid}. 
	The idea of these hybrid observers, with global asymptotic stability guarantees, has been extended to other more complicated state estimation problems on matrix Lie groups such as hybrid pose observers on group $SE(3)$ \cite{wang2019hybrid} and hybrid state observers for inertial navigation systems on group $SE_2(3)$ \cite{wang2020hybrid}.

	From the practical point of view, attitude estimation often involves different types of sensors with different sampling rates. For instance, in the problem of vision-aided attitude estimation, the measurements from a vision sensor are obtained at rates as low as 20 Hz, which is much lower than the rates of the IMU measurements (up to 1000 Hz). However, most of the existing attitude observers in the literature are designed based on continuous output measurements, for instance, \cite{mahony2008nonlinear,hua2013implementation,zlotnik2016nonlinear,berkane2017hybrid,cabecinhas2019integrated}. One common way to implement these continuous observers, using intermittent output measurements, is to apply the zero-order-hold (ZOH) method. Unfortunately, the stability and convergence guarantees are not necessarily preserved under this practical ad-hoc setup. In this context, some state estimation schemes on $\mathbb{R}^n$ relying on intermittent output measurements have been proposed, for instance, in \cite{ferrante2016state, sferlazza2018time, ferrante2021observer, landicheff2022continuous, chen2022variable}. Other attitude estimation schemes, using discrete inertial vector measurements, have been developed on the Lie group $SO(3)$, such as the discrete-time attitude observers in \cite{laila2011discrete, izadi2015discrete, barrau2015intrinsic, bhatt2020rigid}, and the continuous-discrete attitude observers in \cite{khosravian2015recursive, berkane2019attitude}. The latter category assumes continuous (high-rate) measurements of the angular velocity and intermittent measurements of inertial vectors with different sampling rates. A predictor-observer approach has been proposed in \cite{khosravian2015recursive} based on a cascade combination of an output predictor and a continuous attitude observer. 
	In \cite{berkane2019attitude}, the authors developed a (non-smooth) predict-update hybrid estimation scheme, where the estimated attitude is continuously updated by integrating the attitude kinematics using the continuous angular velocity measurements and discretely updated through jumps upon the arrival of the intermittent vector measurements. Both results in \cite{khosravian2015recursive} and \cite{berkane2019attitude}  only guarantee AGAS due to the topological obstruction on $SO(3)$ and the nature of the intermittent inertial vector measurements.

	In this paper, we consider the problem of attitude estimation using continuous (high-rate) angular velocity and intermittent inertial vector measurements with multiple sampling rates. We first propose a hybrid nonlinear observer on manifold $SO(3)\times \mathbb{R}^{n}$ endowed with AGAS guarantees using the notion of almost global input-to-state stability (ISS) on manifolds presented in \cite{angeli2004almost}. In this hybrid observer, the estimated states are continuously updated through integration using the continuous angular velocity measurements and discreetly updated upon the arrival of the intermittent vector measurements. To achieve global asymptotic stability (GAS), we propose a new hybrid observer with a switching mechanism motivated from \cite{wang2021hybrid}.
	The contribution of this paper can be summarized as follows:
	\begin{itemize}
		\item [1)] \textit{Multi-rate vector measurements:} The attitude estimation observers proposed in this work can handle intermittent vector measurements with different sampling rates (\ie, asynchronously-intermittent measurements) where not all the measurements are received at the same time. For instance, in many practical applications, the sampling rate of the IMU  is much higher than that of global positioning systems (GPS) and vision sensors. This is a key difference with respect to most of the existing attitude observers assuming that the vector measurements are continuous or discrete with the same sampling rate \cite{mahony2008nonlinear,hua2013implementation,berkane2017hybrid,laila2011discrete,izadi2015discrete,barrau2016invariant,wang2021cdc}. Our simulation results validate that the convergence is not guaranteed when implementing continuous attitude observers (for instance, the complementary filter \cite{mahony2008nonlinear}) with ZOH method.
		
		\item [2)] \textit{Smooth attitude estimation:} The observers proposed in this paper have a similar continuous-discrete structure as \cite{barrau2016invariant,berkane2019attitude}, while the estimated attitude from our hybrid observers is continuous without any additional smoothing algorithm as in \cite{berkane2019attitude}. The fact that our proposed hybrid observers generate continuous estimates of the attitude makes it suitable for practical applications involving observer-controller implementations.
		
		\item [3)] \textit{Global asymptotic stability:} In contrast to the observers proposed in this paper, the existing attitude observers can only guarantee local or almost global asymptotic stability when dealing with intermittent vector measurements, for instance, \cite{khosravian2015recursive,barrau2016invariant,berkane2019attitude,wang2021cdc}. To the best of our knowledge, this is the first work dealing with nonlinear smooth attitude estimation with GAS guarantees in terms of intermittent and multi-rate vector measurements.
		
		\item [4)] \textit{Experimental validation:} In this paper, our proposed hybrid attitude observer has been experimentally validated using the measurements obtained from an IMU and an RGB-D camera, and compared against some state-of-the-art attitude estimation/determination algorithms. 
		
	\end{itemize}
	
	The remainder of this paper is organized as
	follows: Section \ref{sec:backgroud} provides the preliminary materials that will be used
	throughout this paper. In Section \ref{sec:problem}, we formulate our attitude estimation problem in terms of intermittent vector measurements. In Section \ref{sec:hybrod_obs_AGAS}, a new hybrid attitude observer with AGAS guarantees is proposed. In
	Sections \ref{sec:hybrod_obs_GAS}, we propose a new hybrid attitude observer with GAS guarantees.  Simulation and experimental results are presented in Section \ref{sec:simulation_experiments} to illustrate the performance of the proposed observers.

	\section{Preliminary Material} \label{sec:backgroud}
	\subsection{Notations and Definitions}
	The sets of real, non-negative real, and non-zero natural numbers are denoted by $\mathbb{R}$, $\mathbb{R}_{\geq 0}$, and $\mathbb{N}$, respectively. We denote by $\mathbb{R}^n$ the $n$-dimensional Euclidean space and $\mathbb{S}^{n-1}$ the set of unit vectors in $\mathbb{R}^{n}$.  The Euclidean norm of a vector $x\in \mathbb{R}^n$ is defined as $\|x\| = \sqrt{x\T x}$. Let $I_n$ denote the  $n$-by-$n$ identity matrix and  $0_{n \times m}$ denote the $n$-by-$m$ zero matrix.
	For a given matrix $A\in \mathbb{R}^{n\times n}$, we define $(\lambda_i^A,v_i^A)$ as its $i$-th pair of eigenvalue and eigenvector, and $\mathcal{E}(A):=\{v\in \mathbb{R}^n: v = v_i^A/\|v_i^A\|, Av_i^A = \lambda_i^A v_i^A  \}$ as the set of all unit eigenvectors of $A$.
	Given two matrices $A, B\in \mathbb{R}^{m\times n}$, their Euclidean inner product is defined as $\langle\langle A, B \rangle\rangle = \tr(A\T B)$ where $\tr(\cdot)$ represents the trace of a square matrix, and the Frobenius norm of matrix $A$ is defined as $\|A\|_F = \sqrt{\langle\langle A, A\rangle\rangle} = \sqrt{\tr(A\T A)}$. For each vector $x=[x_1,x_2,x_3]\T \in \mathbb{R}^3$, we define $x^\times$ as a skew-symmetric matrix given by
	\[
	x^\times = \begin{bmatrix}
		0    & -x_3 & x_2  \\
		x_3  & 0    & -x_1 \\
		-x_2 & x_1  & 0
	\end{bmatrix}.
	\]
	For a matrix $A=[a_{ij}]_{1\leq i,j \leq 3} \in \mathbb{R}^{3\times 3}$, we define
	$
	\psi (A) :=
	\frac{1}{2}[
	a_{32} - a_{23},
	a_{13} - a_{31},
	a_{21} - a_{12}
	]\T.
	$
	For any $A\in \mathbb{R}^{3\times 3}, x\in \mathbb{R}^3$, one can verify that
	$
	\langle\langle A, x^\times\rangle\rangle
	= 2x\T \psi (A)
	$. 
	We denote the 3-dimensional  \textit{Special Orthogonal group}  by
	$
	SO(3): = \left\{R\in \mathbb{R}^{3\times 3}| R\T R = I_3, \det(R) = +1\right\}
	$
	and its \textit{Lie algebra} by
	$
	\mathfrak{so}(3) := \left\{\Omega\in \mathbb{R}^{3\times 3}| \Omega\T=-\Omega \right\}.
	$ 
	Let the map $\mathcal{R}_a: \mathbb{R} \times \mathbb{S}^2\to SO(3)$ represent the well-known \textit{angle-axis parameterization} of the attitude, which is given by
	$$
	\mathcal{R}_a(\theta,u) :=  I_3 + \sin(\theta) u^\times + (1-\cos(\theta))(u^\times)^2
	$$
	with $u\in \mathbb{S}^2$ indicating the direction of an axis of rotation and $\theta\in \mathbb{R}$ describing the angle of the rotation about the axis. 

	\subsection{Hybrid Systems Framework}
	Consider a  smooth manifold $\mathcal{M}$ embedded in vector space $\mathbb{R}^{n}$. 
	Let $T_x \mathcal{M}$ denote the \textit{tangent space}\footnote{
		For each $x\in \mathcal{M}$, its tangent space $T_{x}\mathcal{M}$ is defined as the set of all the tangent vectors to $\mathcal{M}$ at $x$, where the tangent vector at $x\in \mathcal{M}$ is defined as $\dot{\gamma}(0) = \frac{d \gamma(t)}{dt}|_{t=0}$, with $\gamma$ being a differentiable curve defined as $\gamma:\mathbb{I} \to \mathcal{M}$ satisfying $\gamma(0) = x$, with $\mathbb{I} \subset \mathbb{R}$ being an open interval containing zero in its interior.
	} %
	to $\mathcal{M}$ at $x$, and $T \mathcal{M}:=\bigcup_{x\in \mathcal{M}} T_x \mathcal{M}$ 
	denote the \textit{tangent bundle} of manifold $\mathcal{M}$. A general model of a hybrid system is given as \cite{goebel2012hybrid}:
	\begin{equation}\mathcal{H}:
		\begin{cases}
			\dot{x} ~~= F(x), & \quad  x \in \mathcal{F} \\
			x^{+} \in G(x),   & \quad  x \in \mathcal{J}
		\end{cases} \label{eqn:hybrid_system}
	\end{equation}
	where $x\in \mathcal{M}$ denotes the state, $x^+$ denotes the state after an instantaneous jump, the \textit{flow map} $F: \mathcal{M} \to T \mathcal{M}$ describes the continuous flow of $x$ on the \textit{flow set} $\mathcal{F} \subseteq \mathcal{M}$, and the \textit{jump map} $G: \mathcal{M}\rightrightarrows  \mathcal{M}$ (a set-valued mapping from $\mathcal{M}$  to $\mathcal{M}$) describes the discrete jump of $x$ on the \textit{jump set} $\mathcal{J} \subseteq \mathcal{M}$.
	A solution $x$ to $\mathcal{H}$ is parameterized by $(t, j) \in \mathbb{R}_{\geq 0} \times \mathbb{N}$, where $t$ denotes the amount of time that has passed and $j$ denotes the number of discrete jumps that have occurred.  A subset $\dom x \subset \mathbb{R}_{ \geq 0} \times \mathbb{N}$ is a \textit{hybrid time domain} if for every $(T,J)\in \dom x$, the set, denoted by $\dom x \bigcap ([0,T]\times \{0,1,\dots,J\})$, is a union of finite intervals  of the form $\bigcup_{j=0}^{J} ([t_j,t_{j+1}] \times \{j\})$ with  a time sequence  $0=t_0 \leq t_1 \leq \cdots \leq t_{J+1}$.
	A solution $x$ to $\mathcal{H}$ is said to be  \textit{maximal} if
	it cannot be extended by flowing nor jumping, and \textit{complete} if its domain $\dom x$ is unbounded. Let $|x|_{\mathcal{A}}$  denote the distance of a point $x$ to a closed set $\mathcal{A} \subset \mathcal{M}$, 
	and then the set $\mathcal{A}$ is said to be:
	\textit{stable} for $\mathcal{H}$ if for each $\epsilon>0$ there exists $\delta>0$ such that each maximal solution $x$ to $\mathcal{H}$ with $|x(0,0)|_{\mathcal{A}} \leq \delta$ satisfies $|x(t,j)|_{\mathcal{A}} \leq \epsilon$ for all $(t,j)\in \dom x$; \textit{globally attractive} for $\mathcal{H}$ if  every maximal solution $x$ to $\mathcal{H}$  is complete and satisfies $\lim_{t+j\to \infty}|x(t,j)|_{\mathcal{A}} = 0$ for all $(t,j)\in \dom x$; \textit{globally asymptotically stable} if it is both stable and globally attractive for $\mathcal{H}$.
	Moreover, the $\mathcal{A}$ is said to be   \textit{exponentially stable} for   $\mathcal{H}$ if there exist $\kappa, \lambda>0$ such that, every maximal solution  $x$  to $\mathcal{H}$ is complete and satisfies  $|x(t,j)|_{\mathcal{A}} \leq \kappa e^{-\lambda(t+j)}|x(0,0)|_{\mathcal{A}}$ for all $(t,j)\in \dom x$ \cite{teel2013lyapunov}.
	We refer the reader to \cite{goebel2009hybrid,goebel2012hybrid} and references therein for more details on hybrid dynamical systems.
	
	\section{Problem Statement}\label{sec:problem}
	Let $\{\mathcal{I}\}$ be an inertial frame and $\{\mathcal{B}\}$ be a body-fixed frame attached to the center of mass of a rigid body.
	Consider the attitude kinematics  for a rigid body on $SO(3)$ as
	\begin{align}
		\dot{R} & = R\omega^\times \label{eqn:dynamics_R}
	\end{align}
	where the rotation matrix $R\in SO(3)$ denotes the attitude (orientation) of the body-fixed frame $\{\mathcal{B}\}$  with respect to the inertial frame $\{\mathcal{I}\}$, and the vector $\omega\in \mathbb{R}^3$ denotes the angular velocity of the rigid body expressed in body-fixed frame $\{\mathcal{B}\}$. In practice, the angular velocity can be obtained from the gyroscopes at a very high rate. Therefore, we assume that the body-fixed frame angular velocity $\omega$ is continuously measurable.

	Consider a family of $N\geq 2$ constant vectors known in the inertial frame, namely inertial vectors, denoted by $r_i\in \mathbb{R}^3$ for all $i\in  \mathbb{I}:= \{1,2,\cdots,N\} $. The measurements of the inertial vectors expressed in the body-fixed frame are modelled as
	\begin{equation}
		b_i  = R \T r_i   \label{eqn:def_b_i}
	\end{equation}
	for all $i\in   \mathbb{I}$ and satisfy the following assumptions:
	\begin{assum}\label{assum:non-collinear}
		There exist at least two non-collinear vectors among the $N\geq 2$ inertial vectors.
	\end{assum}
	
	\begin{assum}\label{assum:intermittent}
		For each inertial vector $r_i, i\in \mathbb{I}$, the time sequence $\{t_k^i\}_{ k\in \mathbb{N}}$ of its measurements is strictly increasing, and there exist two constants $0<T_m^i \leq T_M^i < \infty$ such that $0 \leq t_{1}^i  \leq T_M^i$ and $T_m^i \leq t_{k+1}^i-t_k^i \leq T_M^i $ for all $ 1\leq k\in \mathbb{N} $.
	\end{assum}
	\begin{rem}		
		Note that Assumption \ref{assum:non-collinear} is commonly used, for observability purposes, for the development of attitude estimation schemes, see for instance \cite{mahony2008nonlinear,hua2013implementation,khosravian2015recursive,berkane2017hybrid}. 
         Assumption \ref{assum:intermittent} implies that the measurements of the inertial vectors can be irregular and have different sampling periods. Note also that the sampling is periodic with a regular sampling period if $T_m^i=T_M^i$ for all $i \in \mathbb{I}$. In practice, the inertial vector measurements can be obtained from different sensors (such as magnetometers, accelerometers, vision sensors, star trackers and sun sensors), asynchronously with different and time-varying sampling periods. It is important to point out that, as it is going to be shown later, these lower and upper bounds are not required for the observer design and stability analysis of our proposed hybrid observers.
	\end{rem}
	
	Let $\hat{R}\in SO(3)$ denote the estimate of the attitude $R$ and $\tilde{R}:= R\hat{R}\T \in SO(3)$ denote the attitude estimation error. The objective of this work is to develop a hybrid attitude estimation scheme on $SO(3)$ for system \eqref{eqn:dynamics_R} guaranteeing that the attitude estimation error $\tilde{R}$ converges to $I_3$, 
    using continuous angular velocity measurements and intermittent multi-rate body-frame vector measurements under Assumptions \ref{assum:non-collinear} and \ref{assum:intermittent}.
	
	\section{Hybrid Observer with AGAS Guarantees} \label{sec:hybrod_obs_AGAS}
	\subsection{Observer Design}
	In this section, we will design a hybrid estimation scheme to handle intermittent and multi-rate vector measurements. Let $\hat{R}\in SO(3)$ denote the estimate of the attitude $R$, and $\hat{r}_i\in \mathbb{R}^3$ denote the estimate of the vector $\hat{R}b_i$ corresponding to the $i$-th inertial vector $r_i$. Motivated by \cite{wang2021cdc}, we propose the following hybrid observer on manifold $SO(3)\times \mathbb{R}^{3N}$:
	\begin{subequations}\label{eqn:hybrid_observer0}
		\begin{align}
			& \dot{\hat{R}} = \hat{R}(\omega + k_o\hat{R}\T \sigma_R)^\times \label{eqn:hybrid_observer0-1}                                                       \\
			& \begin{cases}
				\dot{\hat{r}}_i ~ = k_o \sigma_R^\times \hat{r}_i,              &  t \neq t_k^i,~    k\in \mathbb{N} \\
				\hat{r}_i^+   =  \hat{r}_i +    k_{r} (\hat{R}{b}_i-\hat{r}_i), &  t =t_k^i,~  k\in \mathbb{N}
			\end{cases}
			\label{eqn:hybrid_observer0-2}
		\end{align}
	\end{subequations}
	with scalar gains $k_o>0$ and $0<k_r<1$. The innovation term $\sigma_R$ is designed as
	\begin{align}
		\sigma_R & =   \sum_{i=1}^N \rho_i     \hat{r}_i  \times   r_i \label{eqn:def_sigma_R_0} 
	\end{align}
	where $\rho_i>0$ for all $i\in \mathbb{I}$. The structure of our proposed hybrid attitude observer \eqref{eqn:hybrid_observer0} is given in Fig. \ref{fig:hybrid_observer0}.
	\begin{figure}[!ht]
		\centering
		\tikzset{every picture/.style={line width=0.75pt}} 
		
		\begin{tikzpicture}[x=0.43pt,y=0.48pt,yscale=-1,xscale=1] \small
			
			\draw   (100,126) -- (242,126) -- (242,171.27) -- (100,171.27) -- cycle ;
			\draw   (330,129.27) -- (493,129.27) -- (493,175.27) -- (330,175.27) -- cycle ;
			\draw    (243,148.36) -- (328,148.28) ;
			\draw [shift={(330,148.28)}, rotate = 539.94] [color={rgb, 255:red, 0; green, 0; blue, 0 }  ][line width=0.75]    (10.93,-4.9) .. controls (6.95,-2.3) and (3.31,-0.67) .. (0,0) .. controls (3.31,0.67) and (6.95,2.3) .. (10.93,4.9)   ;
			\draw    (492.4,152.15) -- (559.75,152.27) ;
			\draw [shift={(561.75,152.27)}, rotate = 180.1] [color={rgb, 255:red, 0; green, 0; blue, 0 }  ][line width=0.75]    (10.93,-4.9) .. controls (6.95,-2.3) and (3.31,-0.67) .. (0,0) .. controls (3.31,0.67) and (6.95,2.3) .. (10.93,4.9)   ;
			\draw    (532,152) -- (532.2,217) ;
			\draw    (170.6,217) -- (532.2,217) ;
			\draw    (170.6,217) -- (170.6,173.8) ;
			\draw [shift={(170.6,171.8)}, rotate = 450] [color={rgb, 255:red, 0; green, 0; blue, 0 }  ][line width=0.75]    (10.93,-4.9) .. controls (6.95,-2.3) and (3.31,-0.67) .. (0,0) .. controls (3.31,0.67) and (6.95,2.3) .. (10.93,4.9)   ;
			\draw  [dash pattern={on 4.5pt off 4.5pt}]  (14,145.6) -- (100,145.79) ;
			\draw [shift={(100.2,145.8)}, rotate = 180.17] [color={rgb, 255:red, 0; green, 0; blue, 0 }  ][line width=0.75]    (10.93,-4.9) .. controls (6.95,-2.3) and (3.31,-0.67) .. (0,0) .. controls (3.31,0.67) and (6.95,2.3) .. (10.93,4.9)   ;
			\draw    (413,105) -- (413,125.86) ;
			\draw [shift={(413,127.86)}, rotate = 271.79] [color={rgb, 255:red, 0; green, 0; blue, 0 }  ][line width=0.75]    (10.93,-4.9) .. controls (6.95,-2.3) and (3.31,-0.67) .. (0,0) .. controls (3.31,0.67) and (6.95,2.3) .. (10.93,4.9)   ;

			\draw (102,129) node [anchor=north west][inner sep=0.75pt]   [align=left] {Hybrid vector\\ estimation \eqref{eqn:hybrid_observer0-2}};
			\draw (332,132.27) node [anchor=north west][inner sep=0.75pt]   [align=left] {Attitude\\ estimation  \eqref{eqn:hybrid_observer0-1}};
			\draw (14,123) node [anchor=north west][inner sep=0.75pt]   [align=left] {$b_1,...,b_N$};
			\draw (243,126) node [anchor=north west][inner sep=0.75pt]   [align=left] {$\hat{r}_1,...,\hat{r}_N$};
			\draw (498,128) node [anchor=north west][inner sep=0.75pt]   [align=left] {$\hat{R}$};
			\draw (420,110) node [anchor=north west][inner sep=0.75pt]   [align=left] {$\omega$};
		\end{tikzpicture}
		\caption{The architecture of the proposed hybrid observer \eqref{eqn:hybrid_observer0}. 
		}\label{fig:hybrid_observer0}
	\end{figure}
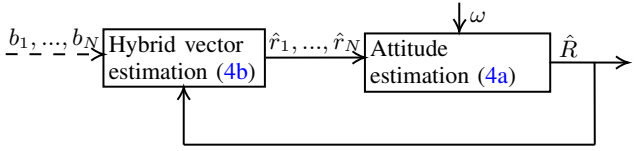
	
	The dynamics of the attitude estimate $\hat{R}$ in \eqref{eqn:hybrid_observer0-1} are designed through a continuous integration using the angular velocity and the innovation term $\sigma_R$. Note that the nonstandard innovation term $\sigma_R$ designed in \eqref{eqn:def_sigma_R_0} relies on the inertial vectors $r_i$ and the vector estimates $\hat{r}_i$. Typically, the innovation term for attitude estimation can be designed as $\sigma_R =  \sum_{i=1}^N \rho_i \hat{R} b_i \times r_i $ if the body-frame vector measurements $b_i, i\in \mathbb{I}$ are continuously available as in \cite{mahony2008nonlinear}.  However, the estimation problem considered in this work is quite challenging since we do not have the continuous vector measurements that allow the construction of a continuous innovation term. To overcome this challenge, we introduce the vector estimate $\hat{r}_i$ and design the innovation term $\sigma_R$ as in \eqref{eqn:def_sigma_R_0} which relies on the vector estimate $\hat{r}_i$ instead of the vector measurement $b_i$. Additionally, the hybrid dynamics for the vector estimate $\hat{r}_i$ are given in \eqref{eqn:hybrid_observer0-2} to ensure that the vector $\hat{r}_i$ tends exponentially to  $\hat{R} b_i$. As it is going to be shown later, the vector $\hat{r}_i$ is continuously updated through integration between vector measurements using the innovation term $\sigma_R$ to guarantee that the estimation error $\hat{r}_i - \hat{R}b_i$ is non-increasing, and discreetly updated upon the arrival of the intermittent vector measurements to guarantee an exponential decrease of the estimation error.

	It is clear from \eqref{eqn:hybrid_observer0-1} that the solutions of the attitude estimate $\hat{R}(t)$ are continuous for all $t\geq 0$ (not necessarily differentiable due to the discrete jumps of $\hat{r}_i$ in the hybrid dynamics \eqref{eqn:hybrid_observer0-2}), which is a key difference with respect to the existing work in \cite{barrau2016invariant,berkane2019attitude}. It is important to point out that continuous/smooth attitude estimates are instrumental
	for practical applications involving observer-controller implementations.
	A similar idea of vector prediction has been proposed in \cite{khosravian2015recursive} relying on the forward integration of the angular velocity on $SO(3)$. The main advantage of our work is that no (large) memory or buffer is required to compute the predicted vector measurements at each time.
	
	\subsection{Stability Analysis}
	To capture the behaviour of the event-triggered system \eqref{eqn:hybrid_observer0-2}, a virtual timer $\tau_i$ for the $i$-th vector measurement is considered, whose hybrid dynamics are modelled as:
	\begin{equation}
		\begin{cases}
			\dot{\tau}_i ~~= -1,         & \tau_i\in [0,T_M^i] \\
			\tau_i^+ \in [T_m^i, T_M^i], & \tau_i\in \{0\}
		\end{cases} \label{eqn:tau}
	\end{equation}
	with $\tau_i(0,0)\in [0,T_M^i]$. The hybrid dynamics of the virtual timers are motivated from \cite{ferrante2016state} and have been considered in \cite{wang2020nonlinear,wang2021cdc}. For each vector $i\in \mathbb{I}$, the virtual timer  $\tau_i$
	decreases to zero continuously, and upon reaching zero it is automatically reset to a value, between $T_m^i$ and $T_M^i$, which represents the arrival time of the next measurement of the $i$-th vector. With these additional states $\tau_i, i\in \mathbb{I}$, the time-driven sampling events can be described as state-driven events, which results in an autonomous hybrid closed-loop system as it is going to be shown later.

	Consider the geometric attitude estimation error $\tilde{R} := R\hat{R}\T$ and the vector estimation errors $\tilde{r}_i :=  r_i- R\hat{R}\T\hat{r}_i = r_i - \tilde{R}\hat{r}_i$ for all $i\in \mathbb{I}$. For the sake of simplicity, we define $\tilde{r} := [\tilde{r}_1\T,\tilde{r}_2\T,\dots,\tilde{r}_N\T]\T \in \mathbb{R}^{3N}$ and $\tau := [\tau_1,\dots,\tau_N]\T \in \mathbb{R}^{N}$. Then, the innovation term $\sigma_R$ defined in  \eqref{eqn:def_sigma_R_0} can be rewritten in terms of the estimation errors as
	\begin{align}
		\sigma_R
		& = - \sum_{i=1}^N \rho_i  r_i^\times  \tilde{R}\T (r_i-\tilde{r}_i) \nonumber  \\
		& = -   \sum_{i=1}^N \rho_i  (  r_i^\times   \tilde{R}\T r_i   - r_i^\times   \tilde{R}\T  \tilde{r}_i  )
		\nonumber \\
		& =  \psi(A\tilde{R} ) +  \Gamma(\tilde{R}) \tilde{r}   \label{eqn:error_sigma_R0}
	\end{align}
	where $\Gamma(\tilde{R}): = \begin{bmatrix}
		\rho_1 r_1^\times \tilde{R}\T,   \rho_2   r_2^\times \tilde{R}\T, \dots, \rho_N   r_N^\times \tilde{R}\T
	\end{bmatrix} $ and  we have made use of the fact $\psi(A\tilde{R} ) = - \sum_{i=1}^N \rho_i r_i^\times \tilde{R}\T r_i$ with
	\begin{equation}
		A= \sum_{i=1}^N \rho_i r_i r_i\T \in \mathbb{R}^{3\times 3}. \label{eqn:def_A}
	\end{equation}
	Hence, from \eqref{eqn:dynamics_R}, \eqref{eqn:hybrid_observer0}, \eqref{eqn:tau} and \eqref{eqn:error_sigma_R0}, the error dynamics of $\tilde{R}$ and $\tilde{r}_i $ are given as follows:
	\begin{subequations}\label{eqn:hybrid_observer_I}
		\begin{align}
			& \dot{\tilde{R}}= \tilde{R} (  -k_o\psi(A\tilde{R}) - k_o \Gamma(\tilde{R}) \tilde{r}  ) ^\times  \label{eqn:dot_tildeR} \\
			& \begin{cases}
				\dot{\tilde{r}}_i ~ = 0,               &  \tau_i\in [0,T_M^i] \\
				\tilde{r}_i^+   = (1-k_r) \tilde{r}_i, &  \tau_i\in \{0\}
			\end{cases}  \label{eqn:hybrid_tilde_r}
		\end{align}
	\end{subequations}
	
	It is clear that the overall closed-loop system \eqref{eqn:tau} and \eqref{eqn:hybrid_observer_I} is autonomous. As pointed out in \cite{mahony2008nonlinear}, it is always possible to tune the scalar weights $\rho_i>0, i\in \mathbb{I}$ such that the matrix $A=A\T$ defined in \eqref{eqn:def_A} is positive definite if there are $N\geq 3$ linearly independent inertial vectors. In the case where only two inertial vectors are non-collinear, for instance, $r_1$ and $r_2$, one can always construct an additional inertial vector $r_{N+1} = r_1 \times r_2$ and its corresponding body-frame vector $b_{N+1} = b_1 \times b_2$. Therefore, under Assumption \ref{assum:non-collinear}, it is reasonable to assume that the matrix $A$ defined in \eqref{eqn:def_A} is positive definite with three distinct eigenvalues $0<\lambda_1^A < \lambda_2^A < \lambda_3^A$.
	
	\begin{prop}\label{prop:AISS-SO3}
		Consider the following system on $SO(3)$:
		\begin{align}
			\dot{\tilde{R}}= \tilde{R} (  -k_o\psi(A\tilde{R}) +  \bar{\Gamma}(\hat{R}) u  ) ^\times  \label{eqn:aiss}
		\end{align}
		where $\tilde{R}, \hat{R}\in SO(3)$, $k_o>0$, $\bar{\Gamma}:SO(3) \to \mathbb{R}^{3\times m}$ and $u\in \mathcal{D}_u\subset \mathbb{R}^m$ with $\mathcal{D}_u $ being  a closed subset of $\mathbb{R}^m$. 
		Suppose that $A$ is positive definite with three distinct eigenvalues, and $\|\bar{\Gamma}(X)\|_F\leq c_{\bar{\Gamma}}$ for all $X\in SO(3)$ with a constant $c_{\bar{\Gamma}}>0$.
		Then, system \eqref{eqn:aiss} is almost globally input-to-state stable with respect to the equilibrium  $I_3$ and input $u$.
	\end{prop}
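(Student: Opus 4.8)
The plan is to apply the almost-input-to-state-stability (almost-ISS) characterization on manifolds of \cite{angeli2004almost}: I will exhibit a smooth, bounded energy function whose dissipation rate along \eqref{eqn:aiss} is negative definite with respect to $I_3$ \emph{except} on a measure-zero set, and then separately establish that (i) the unforced system is almost globally asymptotically stable (AGAS) at $I_3$ and (ii) $I_3$ is locally ISS. The natural candidate is the Wahba-type cost
\[
V(\tilde R) \;=\; \tr\!\big(A(I_3-\tilde R)\big) \;=\; \tfrac12\sum_{i=1}^N\rho_i\,\|r_i-\tilde R r_i\|^2 \;\ge\; 0 ,
\]
which is smooth and bounded on the compact manifold $SO(3)$ and, since $A=A\T>0$, is positive definite with its unique minimum at $\tilde R=I_3$.

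First I would compute $\dot V$ along \eqref{eqn:aiss}. Writing $\omega_R := -k_o\psi(A\tilde R)+\bar\Gamma(\hat R)u$ so that $\dot{\tilde R}=\tilde R\omega_R^\times$, and using the identities $\tr(M\T x^\times)=2x\T\psi(M)$, $\psi(M\T)=-\psi(M)$ together with $A\T=A$, one obtains
\[
\dot V \;=\; -\tr\!\big(A\tilde R\,\omega_R^\times\big) \;=\; 2\,\psi(A\tilde R)\T\omega_R \;=\; -2k_o\|\psi(A\tilde R)\|^2 + 2\,\psi(A\tilde R)\T\bar\Gamma(\hat R)u .
\]
Bounding $\|\bar\Gamma(\hat R)\|_F\le c_{\bar\Gamma}$ and applying Young's inequality yields the ISS-Lyapunov-type estimate
\[
\dot V \;\le\; -k_o\|\psi(A\tilde R)\|^2 + \frac{c_{\bar\Gamma}^2}{k_o}\,\|u\|^2 ,
\]
with input gain $\gamma(s)=(c_{\bar\Gamma}^2/k_o)\,s^2$, provided $\|\psi(A\tilde R)\|^2$ can serve as the (relaxed) dissipation rate.

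Next I would analyze the set $\Psi:=\{\tilde R\in SO(3):\psi(A\tilde R)=0\}$ on which this rate vanishes. Since $\psi(M)=0$ iff $M=M\T$, membership in $\Psi$ is equivalent to $A\tilde R=\tilde R\T A$; because $A$ has three distinct eigenvalues $0<\lambda_1^A<\lambda_2^A<\lambda_3^A$, this forces $\tilde R=I_3$ or $\tilde R=\mathcal{R}_a(\pi,v)$ with $v\in\mathcal{E}(A)$, so $\Psi=\{I_3\}\cup\{\mathcal{R}_a(\pi,v_i^A):i=1,2,3\}$ is finite and in particular of zero Lebesgue measure. For the unforced dynamics ($u\equiv 0$), $\dot V=-2k_o\|\psi(A\tilde R)\|^2\le 0$, so LaSalle's invariance principle gives convergence of every solution to $\Psi$; linearizing \eqref{eqn:aiss} with $u=0$ about each $\mathcal{R}_a(\pi,v_i^A)$ shows these are unstable equilibria (the Hessian of $V$ there is indefinite), hence their stable manifolds are lower-dimensional embedded submanifolds of zero measure and $I_3$ is AGAS. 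Moreover, in exponential coordinates $\tilde R=\exp(\theta^\times)$ near $I_3$ the linearization is $\dot\theta\approx -k_o E\theta+\bar\Gamma(\hat R)u$ with $E=\tfrac12\big((\tr A)I_3-A\big)>0$ (its eigenvalues are the pairwise sums $\tfrac12(\lambda_j^A+\lambda_k^A)$), a Hurwitz linear system driven by $u$, which gives local ISS of $I_3$.

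Finally, collecting these facts — $V$ bounded with a unique strict minimum at $I_3$; the dissipation inequality above whose rate vanishes only on the measure-zero set $\Psi$; the points of $\Psi\setminus\{I_3\}$ being unstable equilibria of the $u\equiv 0$ flow; AGAS of $I_3$ for $u\equiv 0$; local ISS of $I_3$; and compactness of $SO(3)$, which makes \eqref{eqn:aiss} forward complete and bounded-input/bounded-state — I would invoke the almost-ISS theorem of \cite{angeli2004almost} to conclude that \eqref{eqn:aiss} is almost globally input-to-state stable with respect to $I_3$ and input $u$. I expect the main obstacle to be the measure-zero/instability step: one must rigorously certify that the spurious equilibria in $\Psi$ have regions of attraction of zero measure and, more delicately, that a nonzero input cannot produce a robustly attracting behaviour near them. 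This is precisely the gap the almost-ISS framework is built to bridge, so the real work of the proof is arranging the hypotheses so that the reduction/transversality conditions of \cite{angeli2004almost} are met, rather than any single computation.
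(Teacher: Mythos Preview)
Your proposal is correct and follows essentially the same route as the paper. The paper omits the details and simply notes that the proof relies on \cite[Proposition~2]{angeli2010stability} together with (i) AGAS of the zero-input system and (ii) the fact that, since $A$ is positive definite with three distinct eigenvalues, the undesired critical points $\mathcal{R}_a(\pi,v_i^A)$ are finitely many isolated and exponentially unstable equilibria; these are exactly the ingredients you assemble via the Wahba cost $V(\tilde R)=\tr(A(I_3-\tilde R))$, LaSalle, and the linearization. The only cosmetic difference is that the paper invokes the later packaged criterion of \cite{angeli2010stability} (AGAS of the unforced flow plus exponential anti-stability of the spurious equilibria $\Rightarrow$ almost global ISS), whereas you go back to the original framework of \cite{angeli2004almost} and verify the ISS-Lyapunov dissipation inequality and local ISS by hand; the substance is the same.
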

	\begin{rem}
		The proof of Proposition \ref{prop:AISS-SO3} can be conducted by following the same steps as in the proof of \cite[Proposition 1]{wang2021CDC_arxiv}, and is therefore omitted here. The proof relies on the results in \cite[Proposition 2]{angeli2010stability}, along with the facts that system \eqref{eqn:aiss} is AGAS with zero input (\ie, $u\equiv 0$) and has finite exponentially unstable isolated equilibria since that $A$ is positive definite with three distinct eigenvalues. A similar result on almost global ISS of system \eqref{eqn:aiss}, with $A = I_3$ and some high gain $k_o$ depending on the bound of the input $u$, can be found in \cite{vasconcelos2011combination} using a combination of Lyapunov and density functions.
	\end{rem}

	\begin{prop}\label{prop:tilde_r}
		Consider the hybrid system \eqref{eqn:tau} and \eqref{eqn:hybrid_tilde_r} with $0<k_r<1$.
		Then, the vector estimation error $\tilde{r}_i, i\in \mathbb{I}$ converges (globally exponentially) to zero, \ie, there exist constants $\alpha, \lambda>0$ such that
		\begin{equation}
			\|\tilde{r}_i(t,j)\|^2 \leq \alpha  e^{- \lambda(t+j)}  \|\tilde{r}_i(0,0)\|^2  \label{eqn:tilde_r_exp}
		\end{equation}
		for all $(t,j)\in \dom (\tilde{r}_i, \tau_i)$. 
	\end{prop}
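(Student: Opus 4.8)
The plan is to analyze the hybrid system \eqref{eqn:tau}--\eqref{eqn:hybrid_tilde_r} one vector index $i\in\mathbb{I}$ at a time, noting that $\tilde{r}_i$ is constant during flows and is scaled by the factor $(1-k_r)\in(0,1)$ at every jump, so the only real work is to relate the discrete jump count $j$ to the hybrid time $(t,j)$ and to bound the inter-jump growth (which is trivially zero). First I would observe that, since $\dot{\tilde r}_i=0$ on flows, for any $(t,j)\in\dom(\tilde r_i,\tau_i)$ we have $\|\tilde r_i(t,j)\|^2=(1-k_r)^{2j}\|\tilde r_i(0,0)\|^2$. Hence it suffices to produce constants $c_1>0$ and $c_2\ge 0$ such that $j\ge c_1 t - c_2$ along every solution, i.e. the timer $\tau_i$ forces at least a linear-in-$t$ number of jumps; combining this with the exponential decay in $j$ then yields \eqref{eqn:tilde_r_exp}.

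The key step is the timer bookkeeping. From \eqref{eqn:tau}, $\tau_i$ decreases at unit rate on $[0,T_M^i]$ and, upon hitting $0$, is reset into $[T_m^i,T_M^i]$; thus between two consecutive jumps at most $T_M^i$ units of ordinary time elapse, and the elapsed time before the very first jump is at most $\tau_i(0,0)\le T_M^i$. Consequently, after time $t$ the number of jumps satisfies $j\ge t/T_M^i - 1$. Plugging this into the expression above gives
\begin{equation}
\|\tilde r_i(t,j)\|^2=(1-k_r)^{2j}\|\tilde r_i(0,0)\|^2\le (1-k_r)^{2(t/T_M^i-1)}\|\tilde r_i(0,0)\|^2,
\end{equation}
which is an exponential bound in $t$ alone. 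To obtain the combined $t+j$ bound claimed in \eqref{eqn:tilde_r_exp}, I would split the decay, writing $(1-k_r)^{2j}=(1-k_r)^{j}(1-k_r)^{j}$ and using one copy against $j$ directly and the other against $t$ via $j\ge t/T_M^i-1$; setting $\lambda:=\min\{-\ln(1-k_r),\,-\ln(1-k_r)/T_M^i\}$ (both positive since $0<1-k_r<1$) and $\alpha:=(1-k_r)^{-2}$ gives $\|\tilde r_i(t,j)\|^2\le \alpha e^{-\lambda(t+j)}\|\tilde r_i(0,0)\|^2$ for all $(t,j)\in\dom(\tilde r_i,\tau_i)$.

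I would also briefly check completeness so that the estimate is non-vacuous: the flow set $[0,T_M^i]$ and jump set $\{0\}$ for $\tau_i$, together with the fact that $\tilde r_i$ lives in $\mathbb{R}^3$ with bounded (indeed constant) flow and a jump map mapping into the state space, ensure via the standard hybrid existence results in \cite{goebel2012hybrid} that every maximal solution is complete and, by the timer dynamics, is neither eventually-only-flowing past time $T_M^i$ nor Zeno (each jump requires $\tau_i$ to traverse at least $T_m^i>0$, except possibly the first). The main obstacle here is essentially cosmetic rather than mathematical: one must be a little careful about the first jump, since $\tau_i(0,0)$ may be as small as $0$ (forcing an immediate jump) or as large as $T_M^i$, so the "$-1$" offset in $j\ge t/T_M^i-1$ (equivalently the constant $\alpha$) cannot be removed; everything else is a direct consequence of the contraction factor $(1-k_r)$ and the dwell-time structure of $\tau_i$.
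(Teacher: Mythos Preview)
Your proof is correct and arrives at the same conclusion, but the route is genuinely different from the paper's. You exploit the explicit solution formula $\|\tilde r_i(t,j)\|^2=(1-k_r)^{2j}\|\tilde r_i(0,0)\|^2$ together with a dwell-time counting argument $j\ge t/T_M^i-1$, then split the factor $(1-k_r)^{2j}$ to manufacture simultaneous decay in $t$ and $j$. The paper instead constructs a Lyapunov function $V_r^i(\tilde r_i,\tau_i)=e^{\mu\tau_i}\|\tilde r_i\|^2$ with $0<\mu<-\tfrac{2}{\bar T_M}\ln(1-k_r)$, shows $\dot V_r^i=-\mu V_r^i$ along flows and $V_r^i(\tilde r_i^+,\tau_i^+)\le e^{\mu\bar T_M}(1-k_r)^2 V_r^i$ at jumps, and deduces $V_r^i(t,j)\le e^{-\lambda(t+j)}V_r^i(0,0)$ with $\lambda=\min\{\mu,-\ln(e^{\mu\bar T_M}(1-k_r)^2)\}$. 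Your argument is more elementary and transparent for this isolated proposition; the paper's Lyapunov construction is less direct here but pays off later, since the very same function $V_r^i$ is reused as a summand in the composite Lyapunov function \eqref{eqn:def_V} for the proof of Theorem~\ref{thm:hybrid_observer}, where the strict flow decrease $\dot V_r^i=-\mu V_r^i$ is needed to absorb cross terms. A minor remark: your prefactor $\alpha=(1-k_r)^{-2}$ is slightly conservative (a single ``$-1$'' offset gives $(1-k_r)^{-1}$), but this is harmless.
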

	\begin{proof}
		See Appendix \ref{sec:tilde_r}.
	\end{proof}
	\begin{rem}
		From Proposition \ref{prop:tilde_r}, it implies that  $ \|\tilde{r}(t,j)\|^2 \leq \alpha  e^{- \lambda t}  \|\tilde{r}(0,0)\|^2$ for all $(t,j)\in \dom (\tilde{r},\tau) $ with some constants $\alpha, \lambda>0$. It is important to point out that the global convergence of the vector estimation error $\tilde{r}$ is independent from the convergence of the attitude estimation error $\tilde{R}$.
		In fact, the convergence of $\tilde{r}_i$ to zero implies the convergence of $r_i-\tilde{R}\hat{r}_i$ to zero. This result, together with the convergence of $\tilde{R}$ to $I_3$, allows to conclude that $\hat{r}_i\to r_i$ as $t\to \infty$.  
		Moreover, the global (exponential) convergence of $\tilde{r}$ plays an important role in the stability analysis of the attitude estimation error, which can be seen as a vanishing disturbance in the error dynamics of the attitude estimation in \eqref{eqn:aiss}.
	\end{rem}

	Let us define the extended state space $\mathcal{S}_o:=SO(3) \times  \mathbb{R}^{3N} \times   [0,T_M^1] \times \cdots \times [0,T_M^N]$ and the closed set $\mathcal{A}_o: = \{(\tilde{R},\tilde{r},\tau) \in \mathcal{S}_o: \tilde{R}=I_3,   \tilde{r}=0 \}$. Now, one can state the following main result:
	\begin{thm}\label{thm:hybrid_observer_I}
		Consider the hybrid closed-loop system \eqref{eqn:tau} and \eqref{eqn:hybrid_observer_I} with $k_o>0$ and $0<k_r<1$. Suppose that Assumption \ref{assum:non-collinear} and \ref{assum:intermittent} hold and the matrix $A$ defined in \eqref{eqn:def_A} is positive definite with three distinct eigenvalues. Then,  the set  $\mathcal{A}_o$ is almost globally asymptotically stable for the hybrid closed-loop system.
	\end{thm}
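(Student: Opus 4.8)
The plan is to exploit the \emph{cascade} structure of the hybrid closed-loop system \eqref{eqn:tau}, \eqref{eqn:hybrid_observer_I}. The $(\tilde r,\tau)$-subsystem \eqref{eqn:tau}, \eqref{eqn:hybrid_tilde_r} is autonomous (it does not depend on $\tilde R$), and by Proposition~\ref{prop:tilde_r} the error $\tilde r$ decays globally exponentially to $0$ while $\tau$ stays in the compact set $\prod_{i\in\mathbb{I}}[0,T_M^i]$. Crucially, $\tilde R$ does \emph{not} jump in \eqref{eqn:hybrid_observer_I} (only $\tilde r$ and the timers do), so along any solution the $\tilde R$-component is an absolutely continuous solution of the ODE \eqref{eqn:dot_tildeR} driven by the piecewise-constant (hence measurable, locally bounded) signal $t\mapsto\tilde r(t)$, which converges globally exponentially to zero. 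The strategy is therefore: (i) cast \eqref{eqn:dot_tildeR} as an instance of \eqref{eqn:aiss} and invoke Proposition~\ref{prop:AISS-SO3} to obtain almost global ISS of the $\tilde R$-subsystem with input $\tilde r$; (ii) combine this with the vanishing-input property of $\tilde r$ via a converging-input/converging-state argument to get almost global attractivity of $\mathcal{A}_o$; and (iii) add a stability and a completeness argument to upgrade this to AGAS.

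For step (i), set $u:=\tilde r\in\mathbb{R}^{3N}$, $\mathcal{D}_u:=\mathbb{R}^{3N}$, and $\bar\Gamma(\cdot):=-k_o\Gamma(\cdot)$, so that \eqref{eqn:dot_tildeR} reads exactly as \eqref{eqn:aiss}. Since every $X\in SO(3)$ is orthogonal and the $r_i$ are constant, $\|\bar\Gamma(X)\|_F^2=k_o^2\sum_{i\in\mathbb{I}}\rho_i^2\|r_i^\times\|_F^2=:c_{\bar\Gamma}^2$ for all $X\in SO(3)$, i.e.\ $\bar\Gamma$ is uniformly bounded on $SO(3)$ (it is immaterial that the argument here is $\tilde R$ rather than $\hat R$, as Proposition~\ref{prop:AISS-SO3} only uses this uniform bound). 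Under the hypotheses of the theorem, $A=A\T$ defined in \eqref{eqn:def_A} is positive definite with three distinct eigenvalues, so Proposition~\ref{prop:AISS-SO3} applies and the $\tilde R$-subsystem \eqref{eqn:dot_tildeR} is almost globally ISS with respect to $I_3$ and the input $\tilde r$.

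For steps (ii)--(iii), recall that when $\tilde r\equiv 0$ the reduced $\tilde R$-dynamics are AGAS with a \emph{finite} set $\mathcal{U}\subset SO(3)$ of isolated, exponentially unstable equilibria, namely the $\pi$-rotations about the eigenaxes of $A$ (isolated precisely because $A$ has three distinct eigenvalues). The almost-ISS property then provides, for all initial $\tilde R(0,0)\notin\mathcal{U}$, a bound $|\tilde R(t,j)|_{I_3}\le\beta(|\tilde R(0,0)|_{I_3},t+j)+\gamma\!\left(\sup_{(s,\ell)}\|\tilde r(s,\ell)\|\right)$ with $\beta\in\mathcal{KL}$, $\gamma\in\mathcal{K}$, together with the associated asymptotic gain estimate $\limsup_{t+j\to\infty}|\tilde R(t,j)|_{I_3}\le\gamma(\limsup_{t+j\to\infty}\|\tilde r(t,j)\|)$. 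Since $\|\tilde r(t,j)\|\to 0$ for \emph{every} $\tilde r(0,0)$ by Proposition~\ref{prop:tilde_r}, the right-hand side vanishes, hence $\tilde R(t,j)\to I_3$ from every initial condition with $\tilde R(0,0)\notin\mathcal{U}$ and arbitrary $\tilde r(0,0),\tau(0,0)$; the exceptional set $\mathcal{U}\times\mathbb{R}^{3N}\times\prod_{i\in\mathbb{I}}[0,T_M^i]$ has zero Lebesgue measure in $\mathcal{S}_o$ because $\mathcal{U}$ is finite, so $\mathcal{A}_o$ is almost globally attractive. Stability of $\mathcal{A}_o$ follows from the local (set-)stability contained in the ISS estimate near $\tilde R=I_3$ — valid for \emph{all} nearby initial conditions since $\mathcal{U}$ is bounded away from $I_3$ — combined with the stability of $\{\tilde r=0\}$, which is immediate from \eqref{eqn:hybrid_tilde_r} with $0<k_r<1$ ($\|\tilde r\|$ is constant along flows and nonincreasing at jumps), and the invariance of the compact timer set. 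Finally, every maximal solution is complete: there is no finite escape ($SO(3)$ is compact, $F$ is locally Lipschitz, and $\tilde r$ is constant during flow), and there is no Zeno behaviour (each $\tau_i$ enforces at least $T_m^i>0$ units of flow time between two of its jumps). Since the data also satisfy the hybrid basic conditions \cite{goebel2012hybrid}, this establishes AGAS of $\mathcal{A}_o$.

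The delicate point is step (ii): one must ensure that the measure-zero exceptional set is \emph{exactly} (contained in) $\mathcal{U}\times\mathbb{R}^{3N}\times\prod_{i\in\mathbb{I}}[0,T_M^i]$ — in particular that a solution whose $\tilde R$-component does not start in $\mathcal{U}$ is not asymptotically driven \emph{onto} $\mathcal{U}$ by the vanishing perturbation $\tilde r$, while the forward invariance of $\mathcal{U}$ holds only when $\tilde r\equiv 0$. This is precisely what the almost-ISS characterization of Proposition~\ref{prop:AISS-SO3} (via \cite{angeli2010stability}) is designed to deliver: the asymptotic-gain term vanishes with the input, and the repelling (exponentially unstable) nature of $\mathcal{U}$ for the reduced dynamics is robust. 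Once that proposition is in hand, the remaining arguments (boundedness of $\bar\Gamma$, stability, and completeness) are routine.
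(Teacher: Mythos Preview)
Your proposal is correct and follows essentially the same route as the paper: exploit the cascade structure, invoke Proposition~\ref{prop:AISS-SO3} for almost global ISS of the $\tilde R$-subsystem, use Proposition~\ref{prop:tilde_r} for global exponential decay of $\tilde r$, and then combine via a converging-input/converging-state argument for cascades on manifolds. The paper delegates that last combination step to an external result (\cite[Lemma~2]{wang2021cdc}, adapted from \cite[Theorem~2]{angeli2004almost}), whereas you spell out the ISS bound, stability near $I_3$, and completeness by hand; the paper also casts \eqref{eqn:dot_tildeR} in the form \eqref{eqn:aiss} with $\bar\Gamma$ depending on $\hat R$ and input $u=[\tilde r_1\T R,\dots,\tilde r_N\T R]\T$ (so that $\|u\|=\|\tilde r\|$), while your equivalent choice $\bar\Gamma(\tilde R)=-k_o\Gamma(\tilde R)$, $u=\tilde r$ is equally valid since, as you note, only the uniform bound on $\bar\Gamma$ is used. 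One small caveat: your identification of the exceptional set as exactly $\mathcal{U}\times\mathbb{R}^{3N}\times\prod_i[0,T_M^i]$ is a slight overclaim---almost-ISS only guarantees a zero-measure exceptional set, not that it equals $\mathcal{U}$---but you correctly flag this as the delicate point and defer to Proposition~\ref{prop:AISS-SO3} and \cite{angeli2010stability}, which is exactly what the paper does.
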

	\begin{proof}
		The proof of Theorem \ref{thm:hybrid_observer_I} can be conducted by using the almost global ISS property of the $\tilde{R}$-subsystem from Proposition \ref{prop:AISS-SO3} and the global exponential stability of the $(\tilde{x},\tau)$-subsystem from Proposition \ref{prop:tilde_r}, as well as the results of \cite[Lemma 2]{wang2021cdc} (adapted from \cite[Theorem 2]{angeli2004almost}).  
		From \eqref{eqn:dot_tildeR},  one can rewrite the dynamics of the estimation error $\tilde{R}$  in the form of \eqref{eqn:aiss} with $\bar{\Gamma}(\hat{R}) = -k_o
        \begin{bmatrix}
				\rho_1 r_1^\times \hat{R}\T,   \rho_2   r_2^\times \hat{R}\T, \dots, \rho_N   r_N^\times \hat{R}\T
		\end{bmatrix} $ and $u = [\tilde{r}_1\T R,\tilde{r}_2\T R,\dots,\tilde{r}_N\T R]\T$. Then, according to Proposition \ref{prop:AISS-SO3}, the equilibrium $I_3$ is almost globally input-to-state stable with respect to the input $u$ using the facts that $\|\bar{\Gamma}(X)\|_F^2 
		= \sum_{i=1}^N 2\rho_i^2 \|r_i\|^2$ for all $X \in SO(3)$ and  $A$ being positive definite with three distinct eigenvalues. Additionally, according to Proposition \ref{prop:tilde_r} and applying the fact $\|u\|^2= \|\tilde{r}|^2$, it follows that the input $u$ converges globally exponentially to zero from all initial conditions. Therefore, as per \cite[Lemma 2]{wang2021cdc}, the overall closed-loop system is endowed with AGAS guarantees, \ie, the estimated states $(\hat{R},\hat{r_i})$  converge asymptotically to the actual ones $(R,r_i)$ from almost all initial conditions except from a set of zero Lebesgue measure.
	\end{proof}
	
	\begin{rem}
        It's worth pointing out that the design of our proposed observer \eqref{eqn:hybrid_observer0} is independent of the knowledge of the upper bound $T_M^i$. As per Assumption 2, we only assume the existence (not the knowledge) of a finite upper bound $T_M^i$ (not necessarily small) guaranteeing that the time interval between two consecutive vector measurements is finite. 
	\end{rem}
	
	\begin{rem}
		Due to the above-mentioned topological obstruction to GAS on $SO(3)$, the best stability result one can achieve for the proposed observer \eqref{eqn:hybrid_observer0} is AGAS. This motivates us to redesign the hybrid estimation scheme leading to robust and global stability results as shown in the following section.
	\end{rem}
	
	\section{Hybrid Observer with GAS Guarantees} \label{sec:hybrod_obs_GAS}
	
	\subsection{Observer Design}
	It is well-known that global asymptotic stability results are precluded on $SO(3)$ with time-invariant smooth vector fields. This is due to the existence of additional undesired equilibrium points other than the identity. To achieve global attitude estimation on $SO(3)$, in terms of intermittent vector measurements, we propose a new hybrid estimation scheme relying on a switching mechanism proposed in \cite{wang2021hybrid}. We first introduce an auxiliary switching variable $\theta \in \mathbb{R}$ whose dynamics consist of continuous flows and discrete jumps. More specially, it flows to drive the attitude estimation error $\tilde{R}$ toward the equilibrium points of its closed-loop system, and jumps to some nonzero value in a nonempty finite set $\Theta\subset \mathbb{R}$, leading to the minimum value of a cost function, when the attitude estimation error is in the neighbourhood of the undesired equilibrium points. These carefully designed hybrid dynamics of the switching variable $\theta$ (hybrid auxiliary system), together with an appropriate innovation term for the attitude estimation, guarantee that the attitude estimation error will converge (globally asymptotically) to the desired equilibrium point $I_3$ (more details about this hybrid strategy can be found in \cite{wang2021hybrid}). Now, we propose the following hybrid observer:
	\begin{subequations}\label{eqn:hybrid_observer1}
		\begin{align}
			& \dot{\hat{R}} = \hat{R}(\omega + k_o\hat{R}\T \sigma_R)^\times \label{eqn:hybrid_observer1-1}                                                                     \\
			& \begin{cases}
				\dot{\hat{r}}_i ~ = k_o \sigma_R^\times \hat{r}_i,              &  t \neq t_k^i,~    k\in \mathbb{N} \\
				\hat{r}_i^+   =  \hat{r}_i +    k_{r} (\hat{R}{b}_i-\hat{r}_i), &  t = t_k^i,~  k\in \mathbb{N}
			\end{cases}
			\label{eqn:hybrid_observer1-2}     \\
			& \begin{cases}
				\dot{\theta} ~~ =- k_\theta\left( \gamma \theta + 2u\T \mathcal{R}_u\T (\theta) \sigma_R \right), &  (\theta, \hat{r})\in \mathcal{F}_\theta \\
				\theta^+   \in  \{\bar{\theta}\in \Theta: \mu_\phi(\bar{\theta},\hat{r})=0 \},                    &  (\theta, \hat{r})\in \mathcal{J}_\theta
			\end{cases}  \label{eqn:hybrid_dynamics_theta}
		\end{align}
	\end{subequations}
	where $k_o, k_\theta, \gamma >0, 0<k_r<1$, $\Theta \subset \mathbb{R}$ denotes a nonempty finite set, and
	\begin{subequations}\label{eqn:hybrid_dynamics_design}
		\begin{align}
			\mathcal{F}_\theta       & :=  \{(\theta,\hat{r})\in   \mathbb{R}\times \mathbb{R}^{3N}  : \mu_\phi(\theta,\hat{r}) \leq \delta \}                                                              \\
			\mathcal{J}_\theta       & :=  \{(\theta,\hat{r})\in    \mathbb{R} \times \mathbb{R}^{3N}: \mu_\phi(\theta,\hat{r}) \geq \delta \}                                                              \\
			\mu_\phi(\theta,\hat{r}) & := \phi(\theta,\hat{r}) - \min_{\theta'\in \Theta} \phi(\theta',\hat{r})                                                                                    \\
			\phi(\theta,\hat{r})     & := \frac{1}{2}  \sum\nolimits_{i=1}^N \rho_i \|r_i - \mathcal{R}_u\T(\theta)\hat{r}_i\|^2 + \frac{\gamma}{2} \theta^2 \label{eqn:hybrid_dynamics_design-4}
		\end{align}
	\end{subequations}
	with some constant $\delta>0$ and $\hat{r}:=[\hat{r}_1\T,\hat{r}_2\T,\cdots,\hat{r}_N\T]\T \in \mathbb{R}^{3N}$. The set of parameters $\mathcal{P}_A:=\{\Theta,k_\theta,\gamma,u,\delta\}$ for the design of the hybrid dynamics \eqref{eqn:hybrid_observer1}-\eqref{eqn:hybrid_dynamics_design} will be given later. The innovation term $\sigma_R$ is designed as
	\begin{align}
		\sigma_R & =   \sum_{i=1}^N \rho_i     \hat{r}_i  \times \mathcal{R}_u (\theta)  r_i \label{eqn:def_sigma_R}
	\end{align}
	where $\rho_i>0$ for all $i\in \mathbb{I}$ and $\mathcal{R}_u(\theta) := \mathcal{R}_a(u,\theta) = \exp(\theta u^\times), u\in \mathcal{S}^2$.
	The structure of our proposed hybrid observer \eqref{eqn:hybrid_observer1} is given in Fig. \ref{fig:hybrid_observer}.
	\begin{figure}[!ht]
		\centering
		\tikzset{every picture/.style={line width=0.75pt}} 
		
		\begin{tikzpicture}[x=0.43pt,y=0.48pt,yscale=-1,xscale=1] \small
			
			\draw   (100,126) -- (242,126) -- (242,171.27) -- (100,171.27) -- cycle ;
			\draw   (330,129.27) -- (493,129.27) -- (493,175.27) -- (330,175.27) -- cycle ;
			\draw   (329,197.27) -- (493.4,197.27) -- (493.4,243.27) -- (329,243.27) -- cycle ;
			\draw    (243,148.36) -- (328,148.28) ;
			\draw [shift={(330,148.28)}, rotate = 539.94] [color={rgb, 255:red, 0; green, 0; blue, 0 }  ][line width=0.75]    (10.93,-4.9) .. controls (6.95,-2.3) and (3.31,-0.67) .. (0,0) .. controls (3.31,0.67) and (6.95,2.3) .. (10.93,4.9)   ;
			\draw    (280.6,148.36) -- (281,219.28) ;
			\draw    (281,219.28) -- (327,219.24) ;
			\draw [shift={(329,219.24)}, rotate = 539.96] [color={rgb, 255:red, 0; green, 0; blue, 0 }  ][line width=0.75]    (10.93,-4.9) .. controls (6.95,-2.3) and (3.31,-0.67) .. (0,0) .. controls (3.31,0.67) and (6.95,2.3) .. (10.93,4.9)   ;
			\draw    (492.4,152.15) -- (559.75,152.27) ;
			\draw [shift={(561.75,152.27)}, rotate = 180.1] [color={rgb, 255:red, 0; green, 0; blue, 0 }  ][line width=0.75]    (10.93,-4.9) .. controls (6.95,-2.3) and (3.31,-0.67) .. (0,0) .. controls (3.31,0.67) and (6.95,2.3) .. (10.93,4.9)   ;
			\draw    (493.9,218.65) -- (516.6,218.6) ;
			\draw    (516.6,218.6) -- (517,261.28) ;
			\draw    (178,261.28) -- (517,261.28) ;
			\draw    (178,261.28) -- (178.2,173.4) ;
			\draw [shift={(178.2,171.4)}, rotate = 450.13] [color={rgb, 255:red, 0; green, 0; blue, 0 }  ][line width=0.75]    (10.93,-4.9) .. controls (6.95,-2.3) and (3.31,-0.67) .. (0,0) .. controls (3.31,0.67) and (6.95,2.3) .. (10.93,4.9)   ;
			\draw    (532,151.77) -- (532.2,277) ;
			\draw    (138.6,276.2) -- (532.2,277) ;
			\draw    (138.6,276.2) -- (138.6,173.8) ;
			\draw [shift={(138.6,171.8)}, rotate = 450] [color={rgb, 255:red, 0; green, 0; blue, 0 }  ][line width=0.75]    (10.93,-4.9) .. controls (6.95,-2.3) and (3.31,-0.67) .. (0,0) .. controls (3.31,0.67) and (6.95,2.3) .. (10.93,4.9)   ;
			\draw  [dash pattern={on 4.5pt off 4.5pt}]  (14,145.6) -- (100,145.79) ;
			\draw [shift={(100.2,145.8)}, rotate = 180.17] [color={rgb, 255:red, 0; green, 0; blue, 0 }  ][line width=0.75]    (10.93,-4.9) .. controls (6.95,-2.3) and (3.31,-0.67) .. (0,0) .. controls (3.31,0.67) and (6.95,2.3) .. (10.93,4.9)   ;
			\draw    (413.8,198.04) -- (413.53,177.75) ;
			\draw [shift={(413.5,175.75)}, rotate = 449.23] [color={rgb, 255:red, 0; green, 0; blue, 0 }  ][line width=0.75]    (10.93,-4.9) .. controls (6.95,-2.3) and (3.31,-0.67) .. (0,0) .. controls (3.31,0.67) and (6.95,2.3) .. (10.93,4.9)   ;
			\draw    (413,105) -- (413,125.86) ;
			\draw [shift={(413,127.86)}, rotate = 271.79] [color={rgb, 255:red, 0; green, 0; blue, 0 }  ][line width=0.75]    (10.93,-4.9) .. controls (6.95,-2.3) and (3.31,-0.67) .. (0,0) .. controls (3.31,0.67) and (6.95,2.3) .. (10.93,4.9)   ;

			\draw (102,129) node [anchor=north west][inner sep=0.75pt]   [align=left] {Hybrid vector\\ estimation \eqref{eqn:hybrid_observer1-2}};
			\draw (332,132.27) node [anchor=north west][inner sep=0.75pt]   [align=left] {Attitude\\ estimation  \eqref{eqn:hybrid_observer1-1}};
			\draw (331,200.27) node [anchor=north west][inner sep=0.75pt]   [align=left] {Hybrid auxiliary\\ system \eqref{eqn:hybrid_dynamics_theta}};
			\draw (14,123) node [anchor=north west][inner sep=0.75pt]   [align=left] {$b_1,...,b_N$};
			\draw (243,126) node [anchor=north west][inner sep=0.75pt]   [align=left] {$\hat{r}_1,...,\hat{r}_N$};
			\draw (498,128) node [anchor=north west][inner sep=0.75pt]   [align=left] {$\hat{R}$};
			\draw (498,199) node [anchor=north west][inner sep=0.75pt]   [align=left] {$\theta$};
			\draw (420,178) node [anchor=north west][inner sep=0.75pt]   [align=left] {$\theta$};
			\draw (420,110) node [anchor=north west][inner sep=0.75pt]   [align=left] {$\omega$};
		\end{tikzpicture}
		\caption{The architecture of the proposed hybrid observer \eqref{eqn:hybrid_observer1}. 
		}\label{fig:hybrid_observer}
	\end{figure}
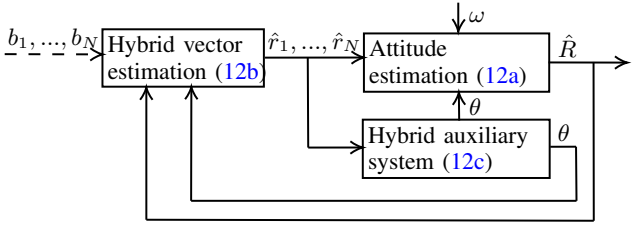

	The main difference between the innovation term $\sigma_R$ designed in \eqref{eqn:def_sigma_R} with respect to the one designed in \eqref{eqn:def_sigma_R_0} is the additional term $\mathcal{R}_u(\theta)$, which is specially designed such that $\sigma_R$ will be reset to some nonzero values after the jump of $\theta$ when $\theta=0$ and the attitude estimation error is close to one of the undesired equilibrium points of the closed-loop system. This term  $\sigma_R$, together with the hybrid dynamics of $\theta$, are instrumental in achieving GAS with our proposed hybrid observer.
		
	Note that the functions $\phi$ and $\mu_\phi$ in \eqref{eqn:hybrid_dynamics_design} are designed to determine the jump of the switching variable $\theta$ using the flow and jump sets defined in \eqref{eqn:hybrid_dynamics_design}. The function $\phi$ defined in  \eqref{eqn:hybrid_dynamics_design-4} can be seen as a cost function as $\hat{r}_i$  converges exponentially to $\hat{R}b_i$. One can verify that  $\phi(\theta,\hat{r}) = 0$ if the estimation error and the switching variable $\theta$ satisfy $\tilde{R}=I_3, \theta=0,\tilde{r}=0$.
	Moreover, the jump map \eqref{eqn:hybrid_dynamics_theta} is designed to ensure that the cost function $\phi$ has a strict decrease after each jump, and the flow map of \eqref{eqn:hybrid_dynamics_theta} motivated from \cite{wang2021hybrid} is designed such that the estimation errors converge to the equilibrium points of the overall closed-loop system in the flow set.

	The set of parameters $\mathcal{P}_A$ is designed as follows:
	\begin{equation}\label{eqn:setP}
		\mathcal{P}_{A}: \begin{cases}
			\Theta   & = \{|\theta_i|\in (0,\pi], i = 1,\dots,m\in \mathbb{N}\}     \\
			k_\theta & > 0       \\
			u        & =  \alpha_1 v_1^A + \alpha_2 v_2^A + \alpha_3 v_3^A \in \mathbb{S}^2         \\
			\gamma   & <\frac{4\Delta^*}{\pi^2}      \\
			\delta   & < (\frac{4\Delta^*}{\pi^2} - \gamma)\frac{\theta_M^2}{2}, \theta_M:=\max_{\theta'\in \Theta} |\theta'|
		\end{cases}
	\end{equation}
	where scalars $\alpha_1,\alpha_2,\alpha_3 \in [0,1]$ with $\sum_{i=1}^3 \alpha_i^2 =1$ and $\Delta^*>0$ are given as per one of the following cases:
	\begin{enumerate}
		\item If $\lambda_1^A=\lambda_2^A$, $\alpha_3^2= 1-\frac{\lambda_2^A}{\lambda_3^A}$ and $\Delta^* = \lambda_1^A(1-\frac{\lambda_2^A}{\lambda_3^A})$;
		\item If $\lambda_1^A < \frac{\lambda_1^A \lambda_3^A}{\lambda^A_3 - \lambda_1^A} \leq  \lambda_2^A $,
		$
		\alpha_i^2 = \frac{\lambda_i^A}{\lambda_2^A +\lambda_3^A}$ for all $i\in \{2,3\}$ and $\Delta^* = \lambda_1^A$;
		\item  If  $ \lambda_1^A <  \lambda_2^A < \frac{\lambda_1^A \lambda_3^A}{\lambda^A_3 - \lambda_1^A}$,
		$
		\alpha_i^2 = 1-  \frac{4}{\varSigma_A}\prod_{j\neq i} \lambda_j^A$ for all $ i\in \{1,2,3\}$ and $
		\Delta^* =  \frac{4}{\varSigma_A} \prod_{j} \lambda_j^A$ with $\varSigma_A = \sum_{l=1}^3\sum_{k\neq l}^3\lambda_l^A\lambda^A_k$.
	\end{enumerate}
	with $(\lambda^A_i , v^A_i)$ denoting the $i$-th pair of eigenvalue-eigenvector of the matrix $A$ and $0<\lambda^A_1\leq \lambda^A_2 < \lambda^A_3$.
	
	\begin{rem}
		The design of $\mathcal{P}_{A}$ is inspired by \cite[Proposition 2]{berkane2017construction} and adapted from  \cite[Proposition 3]{wang2021hybrid}. 
		Note that the choice of the unit vector $u$ in \eqref{eqn:setP}, relying on the eigenvalues and eigenvectors of the matrix $A$, is optimal in terms of $\Delta^*:=\max_{u\in \mathbb{S}^2} (\min_{v\in \mathcal{E}(A)}\Delta(v,u))$  with $\Delta(u,v) = u\T \left(\tr(A)I_3 - A - 2v\T A v(I_3 - vv\T) \right) u$. Note also that small $\gamma$, together with large $\Delta^*$, results in a large gap $\delta$ for the design of the flow and jump sets in \eqref{eqn:hybrid_dynamics_design} (strengthening the robustness to measurement noise). However, small $\gamma$ may slow down the convergence of the auxiliary state $\theta$ as per the flow of \eqref{eqn:hybrid_dynamics_theta}, leading to lower convergence rates for the overall system (see simulation example in \cite{wang2021hybrid}). Hence, in practical applications, the parameter $\gamma$ should be chosen as a suitable trade-off between robustness and speed of convergence, \ie, the measurement noise level and the convergence rate of the overall system.
	\end{rem}

	\subsection{Stability Analysis}
	For the sake of simplicity, we introduce the functions $\tilde{\phi}:SO(3) \times \mathbb{R} \times \mathbb{R}^{3N}  \to \mathbb{R}_{\geq 0}$, $\mu_{\tilde{\phi}} :SO(3) \times \mathbb{R} \times \mathbb{R}^{3N}  \to \mathbb{R}$ and $ {\sigma}  :SO(3) \times \mathbb{R} \times \mathbb{R}^{3N}  \to \mathbb{R}^3$ given as
	\begin{subequations} \label{eqn:def_tilde_phi_mu_phi}
		\begin{align}
			\tilde{\phi}(X,\theta,v)       & := \frac{1}{2}  \sum_{i=1}^N \rho_i \|(I_3- X\mathcal{R}_u(\theta))\T r_i \nonumber                            \\
			& \qquad   \qquad + (X\mathcal{R}_u(\theta))\T v_i\|^2   + \frac{\gamma}{2} \theta^2                             \\
			\mu_{\tilde{\phi}}(X,\theta,v) & := \tilde{\phi}(X,\theta,V) - \min_{\theta'\in \Theta} \tilde{\phi}(X,\theta',v)                               \\
			{\sigma}(X,\theta,v)           & := \mathcal{R}_u(\theta)\psi(AX\mathcal{R}_u(\theta)) +  \Gamma(X,\theta) v \label{eqn:def_tilde_phi_mu_phi-3}
		\end{align}
	\end{subequations}
	where $(X,\theta)\in SO(3) \times \mathbb{R}$, $v = [v_1\T,v_2\T,\dots,v_N\T]\T \in \mathbb{R}^{3N}$ and $\Gamma(X,\theta)=[\rho_1    (\mathcal{R}_u(\theta) r_1)^\times   X\T,   \rho_2   (\mathcal{R}_u(\theta) r_2)^\times X\T ,   \cdots, \\ \rho_N   (\mathcal{R}_u(\theta) r_N)^\times X\T ]$.
	Substituting $\hat{r}_i = \tilde{R}\T(r_i - \tilde{r}_i)$ in \eqref{eqn:hybrid_dynamics_design} and \eqref{eqn:def_sigma_R}, the functions $\phi$  and $\mu_\phi$ defined in \eqref{eqn:hybrid_dynamics_design} can be rewritten in terms of the estimation errors as
	\begin{align}
		\mu_\phi (\theta,\hat{r}) & =\mu_{\tilde{\phi}}(\tilde{R},\theta,\tilde{r}) \\  \phi(\theta,\hat{r}) &= \tilde{\phi}(\tilde{R},\theta,\tilde{r})
	\end{align}
	and the innovation term $\sigma_R$ in \eqref{eqn:def_sigma_R} can be rewritten as
	\begin{align}
		\sigma_R
		& = -  \sum_{i=1}^N \rho_i  (\mathcal{R}_u  (\theta) r_i)^\times \hat{r}_i \nonumber                                                                                                               \\
		& = -\mathcal{R}_u  (\theta) \sum_{i=1}^N \rho_i  r_i^\times   \mathcal{R}_u\T(\theta) \tilde{R}\T (r_i-\tilde{r}_i)   \nonumber                                                                   \\
		& = -\mathcal{R}_u(\theta)  \sum_{i=1}^N \rho_i \left(  r_i^\times   (\tilde{R}\mathcal{R}_u(\theta))\T r_i   - r_i^\times \mathcal{R}_u\T(\theta)   \tilde{R}\T  \tilde{r}_i \right)    \nonumber \\
		& = \mathcal{R}_u(\theta)\psi(A\tilde{R}\mathcal{R}_u(\theta)) +  \Gamma(\tilde{R},\theta) \tilde{r} \nonumber                                                                                     \\
		& = \sigma (\tilde{R},\theta,\tilde{r})   \label{eqn:error_sigma_R}
	\end{align}
	where $A$ is defined in \eqref{eqn:def_A} and $\sigma$ is defined in \eqref{eqn:def_tilde_phi_mu_phi-3}, and we made use of the facts that $x \times y  = x^\times y = -y^\times x$, $(Rx)^\times   = R x^\times R\T  $ and  $\sum_{i=1}^N \rho_i   r_i^\times   R\T r_i = -\psi(AR)  $   for all $R\in SO(3), x,y\in \mathbb{R}^3$.

	From \eqref{eqn:dynamics_R}, \eqref{eqn:hybrid_observer1-1}, \eqref{eqn:hybrid_dynamics_theta} and \eqref{eqn:error_sigma_R}, the hybrid dynamics of the state $x_o := (\tilde{R},\theta) \in SO(3)\times \mathbb{R}:=\mho$   are given as
	\begin{align} \label{eqn:hybrid_observer_error}
		\begin{cases}
			\dot{x}_o~ = F_o(x_o,\tilde{r}), &  (x_o,\tilde{r})\in \mathcal{F}_o \\
			x_o^+ \in G_o(x_o,\tilde{r}),    &  (x_o,\tilde{r})\in \mathcal{J}_o
		\end{cases}
	\end{align}
	where the flow and jump sets $\mathcal{F}_o,\mathcal{J}_o$, and the flow and jump maps $F_o, G_o$ are defined as
	\begin{subequations}\label{eqn:def_Fo_Jo}
		\begin{align}
			\mathcal{F}_o      & := \{(x_o,\tilde{r})\in \mho \times  \mathbb{R}^{3N} : \mu_{\tilde{\phi}}(\tilde{R},\theta,\tilde{r}) \leq \delta \} \label{eqn:def_Fo_Jo-1}  \\
			\mathcal{J}_o      & :=  \{(x_o,\tilde{r})\in \mho \times  \mathbb{R}^{3N} : \mu_{\tilde{\phi}}(\tilde{R},\theta,\tilde{r}) \geq \delta \} \label{eqn:def_Fo_Jo-2} \\
			F_o(x_o,\tilde{r}) & := \begin{pmatrix}
				\tilde{R} (-k_o\sigma (\tilde{R},\theta,\tilde{r})  )^\times \\
				- k_\theta ( \gamma \theta + 2 u\T \mathcal{R}_u\T (\theta) \sigma (\tilde{R},\theta,\tilde{r})   )
			\end{pmatrix}                             \\ 
			G_o(x_o,\tilde{r}) & :=
			\begin{pmatrix}
				\tilde{R} \\
                    \{{\theta}'\in \Theta: \mu_{\tilde{\phi}}(\tilde{R},\theta',\tilde{r})=0  \}
			\end{pmatrix}.
		\end{align}
	\end{subequations}

	\begin{lem}	\label{lem:Psi}
		Consider the definitions of $\tilde{\phi}, \mu_{\tilde{\phi}}$ in \eqref{eqn:def_tilde_phi_mu_phi} and the set of parameters   $\mathcal{P}_A$ designed in \eqref{eqn:setP}. Then, the following inequality holds:
		\begin{equation}
			\mu_{\tilde{\phi}}(\tilde{R},\theta,\tilde{r}) > \delta, \quad  \forall (\tilde{R},\theta,\tilde{r}) \in \Psi_A \label{eqn:mu_delta}
		\end{equation}
		where the set $\Psi_A \subset   \mho \times  \mathbb{R}^{3N}$ is defined as
		\begin{align}
			\Psi_A & :=\{(\tilde{R},\theta,\tilde{r})\in \mho \times \mathbb{R}^{3N}: \nonumber                                       \\
			& ~~\qquad \tilde{R}  = \mathcal{R}_a(\pi,v), v\in \mathcal{E}(A),  \theta=0, \tilde{r}=0 \}.  \label{eqn:def_Psi}
		\end{align}
	\end{lem}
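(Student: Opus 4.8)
The plan is to evaluate $\mu_{\tilde\phi}$ explicitly on the (finite) set $\Psi_A$ of undesired configurations and to show that its value exceeds $\delta$ by a margin that is exactly the one guaranteed by the constraints $\gamma<4\Delta^*/\pi^2$ and $\delta<(4\Delta^*/\pi^2-\gamma)\theta_M^2/2$ built into $\mathcal{P}_A$. Fix an arbitrary $(\tilde R,\theta,\tilde r)\in\Psi_A$; then $\theta=0$, $\tilde r=0$, and $\tilde R=\mathcal{R}_a(\pi,v)=2vv\T-I_3$ for some $v\in\mathcal{E}(A)$ with $Av=\lambda v$, $\|v\|=1$. The first step is to put $\tilde\phi$ into trace form: using the elementary identity $\frac12\sum_{i=1}^N\rho_i\|(I_3-Q)\T r_i\|^2=\tr(A)-\tr(AQ)$, valid for any $Q\in SO(3)$ with $A$ as in \eqref{eqn:def_A}, together with $\tilde r_i=0$, one gets, for every $\theta'\in\mathbb{R}$,
\[
\tilde\phi(\tilde R,\theta',0)=\tr(A)-\tr\big(A\tilde R\,\mathcal{R}_u(\theta')\big)+\tfrac{\gamma}{2}(\theta')^2 ,
\]
so that $\tilde\phi(\tilde R,0,0)-\tilde\phi(\tilde R,\theta',0)=\tr\!\big(A\tilde R(\mathcal{R}_u(\theta')-I_3)\big)-\tfrac{\gamma}{2}(\theta')^2$ since $\mathcal{R}_u(0)=I_3$.

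The second step, which is the crux, is to reduce the trace term to a scalar multiple of $\Delta(v,u)$. I would expand $\mathcal{R}_u(\theta')-I_3=\sin\theta'\,u^\times+(1-\cos\theta')(u^\times)^2$ with $(u^\times)^2=uu\T-I_3$, and exploit that on $\Psi_A$ the symmetric matrix $A$ commutes with $\tilde R=2vv\T-I_3$ (because $v$ is an eigenvector of $A$). Commutativity makes $A\tilde R$ symmetric, so the first-order term $\sin\theta'\,\tr(A\tilde R u^\times)=2\sin\theta'\,u\T\psi(A\tilde R)$ vanishes; and, since $A\tilde R=2\lambda vv\T-A$, the coefficient of $(1-\cos\theta')$ is $u\T A\tilde R u-\tr(A\tilde R)=u\T\big(\tr(A)I_3-A-2\lambda(I_3-vv\T)\big)u=\Delta(v,u)$, with $\Delta$ as in the remark following \eqref{eqn:setP}. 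This gives the clean identity
\[
\tilde\phi(\tilde R,0,0)-\tilde\phi(\tilde R,\theta',0)=(1-\cos\theta')\,\Delta(v,u)-\tfrac{\gamma}{2}(\theta')^2 .
\]

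The last step is to conclude. I would invoke the design of the axis $u$ in $\mathcal{P}_A$ (adapted from \cite[Prop.~3]{wang2021hybrid}; see also \cite[Prop.~2]{berkane2017construction}), which guarantees $\Delta(v,u)\ge\Delta^*>0$ for every $v\in\mathcal{E}(A)$, together with the elementary bound $1-\cos\theta'\ge\frac{2}{\pi^2}(\theta')^2$ for $|\theta'|\le\pi$. Since $\mu_{\tilde\phi}(\tilde R,0,0)=\max_{\theta'\in\Theta}\big[\tilde\phi(\tilde R,0,0)-\tilde\phi(\tilde R,\theta',0)\big]$ and $\Theta$ contains an element of modulus $\theta_M:=\max_{\theta'\in\Theta}|\theta'|\in(0,\pi]$, evaluating at that element yields
\[
\mu_{\tilde\phi}(\tilde R,0,0)\ge(1-\cos\theta_M)\Delta^*-\tfrac{\gamma}{2}\theta_M^2\ge\frac{\theta_M^2}{2}\Big(\frac{4\Delta^*}{\pi^2}-\gamma\Big)>\delta ,
\]
where the final inequality is exactly the choice of $\delta$ in \eqref{eqn:setP}. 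As $(\tilde R,\theta,\tilde r)\in\Psi_A$ was arbitrary, \eqref{eqn:mu_delta} follows.

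I expect the main obstacle to be the bookkeeping in the second step: one must verify carefully that on $\Psi_A$ the $O(\theta')$ contribution of $\mathcal{R}_u(\theta')-I_3$ drops out and the $O((\theta')^2)$ contribution collapses precisely to $\Delta(v,u)$, and one must correctly import the nontrivial (but already available) fact that the optimal rotation axis $u$ achieves $\min_{v\in\mathcal{E}(A)}\Delta(v,u)=\Delta^*>0$ in each of the three eigenvalue cases of \eqref{eqn:setP}. Once these are in hand, the rest is routine trace algebra and a single scalar inequality.
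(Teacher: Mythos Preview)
Your proof is correct and follows essentially the same approach as the paper's: both reduce $\tilde\phi$ to the trace form $\tr((I_3-\tilde R\mathcal{R}_u(\theta'))A)+\tfrac{\gamma}{2}(\theta')^2$ when $\tilde r=0$ and then invoke the construction from \cite{wang2021hybrid,berkane2017construction} to obtain the gap $>\delta$. The paper simply defers the remaining computation to \cite[Proposition~2]{wang2021hybrid}, whereas you unpack that step explicitly (showing the $\sin\theta'$ term vanishes by symmetry of $A\tilde R$ on $\Psi_A$, identifying the $(1-\cos\theta')$ coefficient with $\Delta(v,u)$, and closing with the bound $1-\cos\theta'\ge\tfrac{2}{\pi^2}(\theta')^2$); the substance is the same.
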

	
	The proof of Lemma \ref{lem:Psi} can be easily conducted from \cite[Proposition 2]{wang2021hybrid}  using the facts that $\mu_{\tilde{\phi}}(\tilde{R},\theta,\tilde{r}) = \tilde{\phi}(\tilde{R},0,0) - \min_{\theta'\in \Theta} \tilde{\phi}(\tilde{R},\theta',0) 
	= \tr((I_3-\tilde{R})A) - \min_{\theta'\in \Theta}   ( \tr((I_3-\tilde{R}\mathcal{R}_u(\theta'))A )  + \frac{\gamma}{2} \theta'^2  ) > \delta $ for each $(\tilde{R},\theta,\tilde{r})\in \Psi_A $. This result, together with the definitions of the flow and jump sets $\mathcal{F}_o, \mathcal{J}_o$ in \eqref{eqn:def_Fo_Jo-1} and \eqref{eqn:def_Fo_Jo-2}, implies that all the points in the set $\Psi_A$ are only located in the jump set $\mathcal{J}_o$ (\ie, $\Psi_A \cap \mathcal{F}_o = \emptyset$ and $\Psi_A \subset \mathcal{J}_o$).

	\begin{prop}\label{prop:x_o}
		Consider the hybrid system \eqref{eqn:hybrid_observer_error} with $\tilde{r}(t)\equiv 0$ for all $t\geq 0$. Suppose that Assumption \ref{assum:non-collinear} holds, and choose the set $\mathcal{P}_A$ as in \eqref{eqn:setP},  $k_o>0$ and $0<k_r<1$. Then,  the equilibrium point $(I_3,0)$ is globally asymptotically stable for the hybrid system \eqref{eqn:hybrid_observer_error}.
	\end{prop}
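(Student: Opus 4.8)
The plan is to analyze the hybrid system \eqref{eqn:hybrid_observer_error} under the simplification $\tilde{r}\equiv 0$ by constructing a suitable Lyapunov function on the hybrid state space $\mho = SO(3)\times\mathbb{R}$, and then invoke a hybrid LaSalle-type invariance principle together with Lemma \ref{lem:Psi} to rule out convergence to anything other than $(I_3,0)$. With $\tilde{r}\equiv 0$ the cost function reduces to $\tilde{\phi}(\tilde{R},\theta,0) = \tr((I_3-\tilde{R}\mathcal{R}_u(\theta))A) + \tfrac{\gamma}{2}\theta^2$, and the natural candidate is $\mathcal{V}(x_o) := \tilde{\phi}(\tilde{R},\theta,0)$. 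First I would verify that $\mathcal{V}$ is positive definite with respect to $(I_3,0)$, radially unbounded in $\theta$, and that its superlevel-set geometry is compatible with the flow/jump decomposition. Along flows, differentiating $\tr((I_3-\tilde{R}\mathcal{R}_u(\theta))A)$ using $\dot{\tilde{R}} = \tilde{R}(-k_o\sigma)^\times$ and the $\theta$-dynamics, and exploiting the identity $\langle\langle B,x^\times\rangle\rangle = 2x^\top\psi(B)$ together with $\sigma(\tilde{R},\theta,0) = \mathcal{R}_u(\theta)\psi(A\tilde{R}\mathcal{R}_u(\theta))$, I expect the cross terms to telescope so that $\dot{\mathcal{V}} = -2k_o\|\sigma(\tilde{R},\theta,0)\|^2 - k_\theta\gamma^2\theta^2 - (\text{nonneg. coupling}) \le 0$; this is the key computation and I would carry it out carefully to confirm the sign. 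Along jumps, the jump map sends $\theta$ to $\arg\min_{\theta'\in\Theta}\tilde{\phi}(\tilde{R},\theta',0)$, which on $\mathcal{J}_o$ (where $\mu_{\tilde{\phi}}\ge\delta$) forces $\mathcal{V}(x_o^+) \le \mathcal{V}(x_o) - \delta$, a strict decrease bounded away from zero.

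Next I would establish the basic hybrid well-posedness prerequisites: the data $(\mathcal{F}_o,F_o,\mathcal{J}_o,G_o)$ satisfies the hybrid basic conditions (closedness of $\mathcal{F}_o,\mathcal{J}_o$ from continuity of $\mu_{\tilde{\phi}}$, outer semicontinuity and local boundedness of $G_o$ since $\Theta$ is finite, continuity of $F_o$), and that $\mathcal{F}_o\cup\mathcal{J}_o = \mho$ so that $G_o$ is never evaluated outside its domain — here I would note that on the overlap $\mu_{\tilde{\phi}}=\delta$ the system may flow or jump. Completeness of maximal solutions follows because $\mathcal{V}$ is nonincreasing hence sublevel sets are forward invariant and (in $\theta$) bounded, ruling out finite escape, and no solution can get stuck since from any point one can either flow or jump. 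Crucially, solutions cannot jump infinitely often in bounded time: each jump costs at least $\delta$ in $\mathcal{V}$ while $\mathcal{V}\ge 0$, so the number of jumps is finite, and thus every complete solution eventually flows forever — this reduces the asymptotic analysis to the flow dynamics on a region where $\mu_{\tilde{\phi}}\le\delta$.

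For attractivity I would apply the hybrid invariance principle: $\dot{\mathcal{V}}\le 0$ on flows and $\mathcal{V}$ nonincreasing on jumps, so every complete solution converges to the largest weakly invariant set contained in $\{\dot{\mathcal{V}}=0\}\cap(\mathcal{F}_o$-closure$)$. On this set $\sigma(\tilde{R},\theta,0)=0$ and $\theta=0$, hence $\psi(A\tilde{R})=0$, which by the standard characterization (since $A$ is symmetric positive definite with distinct eigenvalues) forces $\tilde{R}\in\{I_3\}\cup\{\mathcal{R}_a(\pi,v):v\in\mathcal{E}(A)\}$. The undesired equilibria $\mathcal{R}_a(\pi,v)$ with $\theta=0,\tilde{r}=0$ are exactly the set $\Psi_A$, and Lemma \ref{lem:Psi} gives $\mu_{\tilde{\phi}}>\delta$ there, so $\Psi_A\subset\mathcal{J}_o$ and $\Psi_A\cap\mathcal{F}_o=\emptyset$; since any residual solution must flow forever, it cannot remain in $\Psi_A$, leaving $(I_3,0)$ as the only possibility. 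Stability of $(I_3,0)$ follows from the Lyapunov decrease: small $\mathcal{V}$ sublevel sets are contained near $(I_3,0)$ and forward invariant, and for initial conditions close enough to $(I_3,0)$ one has $\mu_{\tilde{\phi}}<\delta$ so the solution starts (and by invariance stays) in $\mathcal{F}_o$ with no jumps, making $\mathcal{V}$ a genuine Lyapunov function. Combining stability and global attractivity yields global asymptotic stability of $(I_3,0)$.

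\emph{Main obstacle.} The delicate step is the flow computation of $\dot{\mathcal{V}}$: the $\theta$-dynamics contain the term $-2k_\theta u^\top\mathcal{R}_u^\top(\theta)\sigma$, and one must check that, when combined with $\partial_\theta\tr((I_3-\tilde{R}\mathcal{R}_u(\theta))A)$ and the $-k_\theta\gamma\theta$ term, all indefinite cross terms cancel and the residual is manifestly nonpositive. Relatedly, one must confirm the invariance-principle analysis genuinely isolates $(I_3,0)$ — i.e., that no nontrivial trajectory can flow forever while trapped near (but not at) a point of $\Psi_A$, which is precisely where the gap $\delta$ and the design constraint $\gamma<4\Delta^*/\pi^2$ in \eqref{eqn:setP} enter through Lemma \ref{lem:Psi}.
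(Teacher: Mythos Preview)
Your proposal is correct and follows essentially the same route as the paper: the Lyapunov candidate $\mathcal{V}(x_o)=\tilde{\phi}(\tilde{R},\theta,0)=\tr((I_3-\tilde{R}\mathcal{R}_u(\theta))A)+\tfrac{\gamma}{2}\theta^2$ is exactly the paper's $V_R$, the jump decrease by at least $\delta$, the finiteness of jumps, and the invariance-principle argument combined with Lemma~\ref{lem:Psi} all match. The only refinement is the flow computation you flag as delicate: the clean form is $\dot{\mathcal{V}}=-2k_o\|\psi(A\mathcal{T})\|^2-k_\theta\nu^2$ with $\mathcal{T}=\tilde{R}\mathcal{R}_u(\theta)$ and $\nu:=\gamma\theta+2u^\top\psi(A\mathcal{T})$; the residual does not split as ``$-k_\theta\gamma^2\theta^2$ minus a nonnegative coupling'' (the cross term $4k_\theta\gamma\theta\,u^\top\psi(A\mathcal{T})$ is indefinite), but it does group into the manifestly nonpositive square $-k_\theta\nu^2$, and the zero set $\{\psi(A\mathcal{T})=0,\ \nu=0\}=\{\psi(A\tilde{R})=0,\ \theta=0\}$ is precisely what you use downstream.
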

	\begin{proof}
		See Appendix \ref{sec:x_o}.
	\end{proof}
	\begin{rem}
		The key of the proof is to show that the Lyapunov function is non-increasing in the flow set and strictly decreasing in the jump set, and all the undesired equilibria are located in the jump set. Hence, the equilibrium point $(I_3,0)$ is globally asymptotically stable for the hybrid  system \eqref{eqn:hybrid_observer_error} with $\tilde{r}(t)\equiv 0$ for all $t\geq 0$. Note that $\tilde{r}$  converges (globally exponentially) to zero as shown in Proposition \ref{prop:tilde_r}.
	\end{rem}

	Let us define the extended state space $\mathcal{S}:=SO(3)\times  \mathbb{R} \times  \mathbb{R}^{3N} \times   [0,T_M^1] \times \cdots \times [0,T_M^N]$ and the closed set $\mathcal{A}: = \{(\tilde{R},\theta,\tilde{r},\tau) \in \mathcal{S}: \tilde{R}=I_3, \theta=0,\tilde{r}=0 \}$.
	Now, one can state the following main result:
	\begin{thm}\label{thm:hybrid_observer}
		Consider the hybrid closed-loop system \eqref{eqn:tau}, \eqref{eqn:hybrid_tilde_r} and \eqref{eqn:hybrid_observer_error}. 	Suppose that Assumption \ref{assum:non-collinear}-\ref{assum:intermittent} hold, and the matrix $A$ defined in \eqref{eqn:def_A} is positive definite with $0<\lambda^A_1\leq \lambda^A_2 < \lambda^A_3$. Choose the set of parameters   $\mathcal{P}_A$ as in \eqref{eqn:setP},  $k_o>0$ and $0<k_r<1$. Then,  the set  $\mathcal{A}$ is globally asymptotically stable for the hybrid closed-loop system.
	\end{thm}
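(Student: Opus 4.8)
The plan is to exploit the cascade structure of the closed loop. The $(\tilde r,\tau)$-subsystem \eqref{eqn:tau}--\eqref{eqn:hybrid_tilde_r} is autonomous and, by Proposition \ref{prop:tilde_r}, is globally exponentially stable with respect to $\{\tilde r=0\}$ irrespective of $x_o=(\tilde R,\theta)$; the $x_o$-subsystem \eqref{eqn:hybrid_observer_error} is driven by $\tilde r$ and, for $\tilde r\equiv 0$, is globally asymptotically stable with respect to $(I_3,0)$ by Proposition \ref{prop:x_o}. I would therefore establish GAS of $\mathcal{A}$ by the same cascade/reduction machinery used for Theorem \ref{thm:hybrid_observer_I}, now with the \emph{global} asymptotic stability of Proposition \ref{prop:x_o} replacing the almost-global property; however, the non-compactness of the $\theta$-variable and the fact that the flow and jump sets $\mathcal{F}_o,\mathcal{J}_o$ depend on $\tilde r$ through $\mu_{\tilde\phi}$ force the argument to be carried out on the full autonomous system on $\mathcal{S}$.

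First I would verify the hybrid basic conditions of \cite{goebel2012hybrid}: $\mathcal{F}_o$, $\mathcal{J}_o$ and the timer constraint sets $[0,T_M^i]$ are closed (sublevel/superlevel sets of the continuous map $\mu_{\tilde\phi}$), $F_o$ is continuous, and $G_o$ together with the timer resets $\tau_i^+\in[T_m^i,T_M^i]$ is outer semicontinuous, locally bounded and nonempty-valued on $\mathcal{J}_o$ --- its $\theta$-component $\{\theta'\in\Theta:\mu_{\tilde\phi}(\tilde R,\theta',\tilde r)=0\}$ being the argmin over the \emph{finite} set $\Theta$. Next, I would prove forward completeness and uniform boundedness of all maximal solutions: $\tilde R\in SO(3)$ and $\tau_i\in[0,T_M^i]$ are confined to compact sets, $\|\tilde r_i\|$ is nonincreasing along flows and contracts at jumps, and $\theta$ stays bounded because along flows $\dot\theta\le -k_\theta\gamma\theta+c$ for a constant $c$ (using that $\sigma(\tilde R,\theta,\tilde r)$ in \eqref{eqn:def_tilde_phi_mu_phi-3} is uniformly bounded and that $\theta^+\in\Theta$). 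Zeno behaviour is excluded since the timers enforce dwell times $\ge T_m^i>0$ between $\hat r_i$-jumps, and two consecutive $\theta$-jumps are separated by a uniform (on compact sets) amount of flow time because immediately after a $\theta$-jump $\mu_{\tilde\phi}=0<\delta$, while $\mu_{\tilde\phi}$ changes at a bounded rate along flows and the $\tilde r$-contractions at jumps cannot push it across $\delta$; completeness then follows from \cite[Prop.~6.10]{goebel2012hybrid}.

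With well-posedness and boundedness in hand I would conclude by reduction. Restricting the closed-loop dynamics to the forward-invariant set $\{\tilde r=0\}$ reduces \eqref{eqn:hybrid_observer_error} exactly to the nominal system of Proposition \ref{prop:x_o}, which is GAS with respect to $\mathcal{A}$; since all solutions are bounded and the system is well posed, this nominal GAS is uniform on compact subsets, hence robust to the vanishing perturbation $\tilde r$. Combining the resulting $\mathcal{KL}$-estimate for the nominal $x_o$-dynamics with the exponential decay $\tilde r(t,j)\to 0$ from Proposition \ref{prop:tilde_r} yields $x_o(t,j)\to(I_3,0)$, i.e.\ every maximal solution converges to $\mathcal{A}$. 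For stability of $\mathcal{A}$ I would observe that in a neighbourhood of $\mathcal{A}$ one has $\mu_{\tilde\phi}<\delta$, so no $\theta$-jumps occur there and the $x_o$-dynamics are governed by the smooth flow map $F_o$; a local ISS-type estimate for this flow with input $\tilde r$, together with the small-overshoot bound of the $(\tilde r,\tau)$-subsystem, gives stability, and stability plus global attractivity gives GAS.

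The main obstacle is the coupling through $\mathcal{F}_o,\mathcal{J}_o$: because these sets (not merely the vector fields) depend on $\tilde r$ via $\mu_{\tilde\phi}$, one cannot invoke a clean hybrid ISS/cascade lemma for $x_o$ in isolation, and the robustness/reduction step must be performed on the full state $\mathcal{S}$. Within that step the delicate point is the uniform semi-dwell-time estimate between consecutive $\theta$-jumps --- needed both for completeness and for the $\mathcal{KL}$-robustness argument --- which rests on Lemma \ref{lem:Psi}: it guarantees that the undesired equilibria $\tilde R=\mathcal{R}_a(\pi,v)$, $v\in\mathcal{E}(A)$, lie strictly inside $\mathcal{J}_o$ and at a positive $\mu_{\tilde\phi}$-distance from $\mathcal{F}_o$, so that after each $\theta$-jump the solution genuinely flows for a nontrivial time before any further jump of $\theta$, and an attractor near those points is avoided.
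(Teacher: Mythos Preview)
Your strategy differs substantially from the paper's. The paper does not argue by cascade/reduction at all: it constructs a single Lyapunov function $V(x)=V_R(x_o)+\varepsilon\sum_i V_r^i(\tilde r_i,\tau_i)$ on the full state, chooses $\varepsilon$ large enough that the $\tilde r$-driven cross terms in $\dot V_R$ are absorbed by the decay of $\sum_i V_r^i$, so that $\dot V\le 0$ along all flows, and then tracks $V$ across the three jump types (measurement jumps alone, a $\theta$-jump alone, both). Because a $\theta$-jump can increase $V$ by at most $-\delta+2\sum_i\rho_i\|\tilde r_i\|^2$, the exponential bound of Proposition~\ref{prop:tilde_r} yields a finite $(T_0,J_0)$ after which $\sum_i\rho_i\|\tilde r_i\|^2<\delta/2$ and $V$ is non-increasing at \emph{every} jump; attractivity then follows directly from the hybrid invariance principle \cite[Thm.~4.7]{sanfelice2007invariance} applied on the full system, together with Lemma~\ref{lem:Psi} to identify the invariant set. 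No robustness-of-GAS step is ever invoked.

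Your outline has two gaps. The dwell-time claim that ``$\tilde r$-contractions at jumps cannot push $\mu_{\tilde\phi}$ across $\delta$'' is unsupported and in general false: $\tilde\phi$ in \eqref{eqn:def_tilde_phi_mu_phi} is quadratic in $\tilde r$, so a measurement jump $\tilde r_i\mapsto(1-k_r)\tilde r_i$ with large $\|\tilde r_i\|$ can move $\mu_{\tilde\phi}$ by an $O(\|\tilde r_i\|)$ amount and land back in $\mathcal J_o$, triggering another $\theta$-jump at the same instant. Zeno can still be excluded (at most $N$ measurement jumps per instant since each $\tau_i^+\ge T_m^i>0$, and $\tilde\phi$ drops by $\delta$ at every $\theta$-jump while remaining bounded), but that is a different argument from yours and it does \emph{not} yield a uniform positive flow-time between consecutive $\theta$-jumps on compact sets.

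The more serious gap is the inference ``nominal GAS is uniform on compact subsets, hence robust to the vanishing perturbation $\tilde r$.'' The robustness results for well-posed hybrid systems in \cite[Ch.~7]{goebel2012hybrid} deliver only semiglobal \emph{practical} stability under small perturbations; they do not give asymptotic convergence when a decaying signal perturbs not only the flow map but also the flow and jump sets $\mathcal F_o,\mathcal J_o$ themselves through $\mu_{\tilde\phi}$. Promoting practical to asymptotic stability in this situation requires exactly the quantitative Lyapunov decrease that the paper establishes; without it, Propositions~\ref{prop:tilde_r} and~\ref{prop:x_o} together do not imply $x_o(t,j)\to(I_3,0)$, and the attractivity half of GAS remains unproven.
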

	\begin{proof}
		See Appendix \ref{sec:hybrid_observer}.
	\end{proof}
	\begin{rem}
		The key of the proof of Theorem \ref{thm:hybrid_observer} is to show that the set  $\mathcal{A}$ is stable and globally attractive for the hybrid closed-loop system using the results of \cite{sanfelice2007invariance,goebel2012hybrid}. Unlike the (hybrid) observers developed on vector space $\mathbb{R}^n$ in \cite{ferrante2016state, sferlazza2018time, ferrante2021observer, landicheff2022continuous, chen2022variable}, the proposed hybrid observer on manifold $SO(3)\times \mathbb{R}^n$ consists of two types of hybrid strategies to obtain global and smooth attitude estimation. The first one involves a continuous-discrete estimation scheme to tackle the intermittent vector measurements. The second one introduces a switching mechanism to overcome the topological obstruction to GAS on $SO(3)$. This makes the proof of our Theorem \ref{thm:hybrid_observer} more challenging with respect to the existing hybrid observers such as \cite{wu2015globally,berkane2017construction,berkane2017hybrid,berkane2019attitude,wang2021hybrid}, where only one type of jump is involved. It is worth pointing out that the proposed hybrid observer designed on the manifold $SO(3)\times \mathbb{R}^n$ can be extended to different systems evolving on manifolds, for instance, pose (attitude and position) estimation on $SE(3)$, pose and linear velocity estimation on $SE_2(3)$, and SLAM (pose, linear velocity and landmark position estimation) on $SE_{n+2}(3)$. 
	\end{rem}

	\section{Simulation and Experimental Results} \label{sec:simulation_experiments}
	\subsection{Simulation Results}
	In this subsection, some simulation results are presented to illustrate the performance of the proposed hybrid observers. For comparison purposes, we consider the proposed hybrid observer \eqref{eqn:hybrid_observer0} (referred to as `Proposed'), the predict-update hybrid observer proposed in \cite{berkane2019attitude} (referred to as `Hybrid \cite{berkane2019attitude}'), and the nonlinear complementary filter of \cite{mahony2008nonlinear} with the zero-order-hold method (referred to as `CF \cite{mahony2008nonlinear} + ZOH') given as
	\begin{subequations}
		\begin{align}
			\dot{\hat{R}} & = \hat{R}(\omega + k_P\hat{R}\T \sigma_R)^\times  \\
			\sigma_R      & =   \sum_{i=1}^N k_i \hat{R} b_i^m  \times   r_i  \\
			b_i^m(t)      & = b_i(t_k^i), \quad  t\in [t_k^i, t_{k+1}^i)
		\end{align}
	\end{subequations}
	where $k_i>0, k_P>0$. For the sake of simplicity, we consider the angle $\vartheta= \frac{180}{\pi} \arccos(\frac{1}2(\tr(\tilde{R})-1))$ as a representation of the attitude estimation errors plotted in the figures. 	
	
	We assume that there are three inertial vectors, chosen as $r_1 = [\sqrt{2}/2~ \sqrt{2}~ 0]\T, r_2 = [\sqrt{2}/2~ -\sqrt{2}/2~ 0]\T$ and $r_3 = [0~ 0~ -1]\T$, available for measurement, whose (noisy) measurements are written as $b_i = R\T r_i + n_b$ with $n_b$ denoting the zero mean white  Gaussian noise and $\text{Cov}(n_b) = \sigma I_3$.  The sampling rate of the vector measurements $b_1, b_2$ and $b_3$ are around $f_1\approx 10$Hz ($T_m^1 = 0.09, T_M^1 = 0.11$), $f_2\approx 20$Hz ($T_m^2 = 0.04, T_M^2 = 0.06$) and $f_3\approx 50$Hz ($T_m^3 = 0.01, T_M^3 = 0.03$), respectively. The attitude $R(t)$ of the rigid body is integrated using $\omega(t) = \omega_o[\sin(0.1t)~ \sin(0.1t + \pi/3)~  \cos(0.5t)]\T$ (rad/s) with $\omega_o=2$ denoting the amplitude and the initial attitude $R(0) = I_3$. We assume that the angular velocity $\omega$ is quasi-continuous and the differential equations for each observer are solved using the fourth-order Runge-Kutta method with a fixed step size of 1 millisecond (\ie, running at 1000Hz).
	The same initial attitude is considered for each observer with $\hat{R}(0) = \mathcal{R}_a(\pi/2, v), v = [0.8~0.6~0]\T$, and $\hat{r}_i(0) = r_i, \theta(0)=0$. The gain parameters are carefully tuned such that all the observers have a similar convergence rate (see Fig. \ref{fig:figure1}), where $\rho_1 = 0.2, \rho_2 = 0.3, \rho_3 = 0.5, k_r = 0.45, k_o = 15$ for the proposed hybrid observer, $\rho_1 = 0.03, \rho_2 = 0.0450, \rho_3 = 0.075$ for the `Hybrid  \cite{berkane2019attitude}' and $k_i=\rho_i, k_P = 12$ for the `CF \cite{mahony2008nonlinear} + ZOH'.

	\begin{figure}[!ht]
		\centering
		~~~~\includegraphics[width=0.95\linewidth]{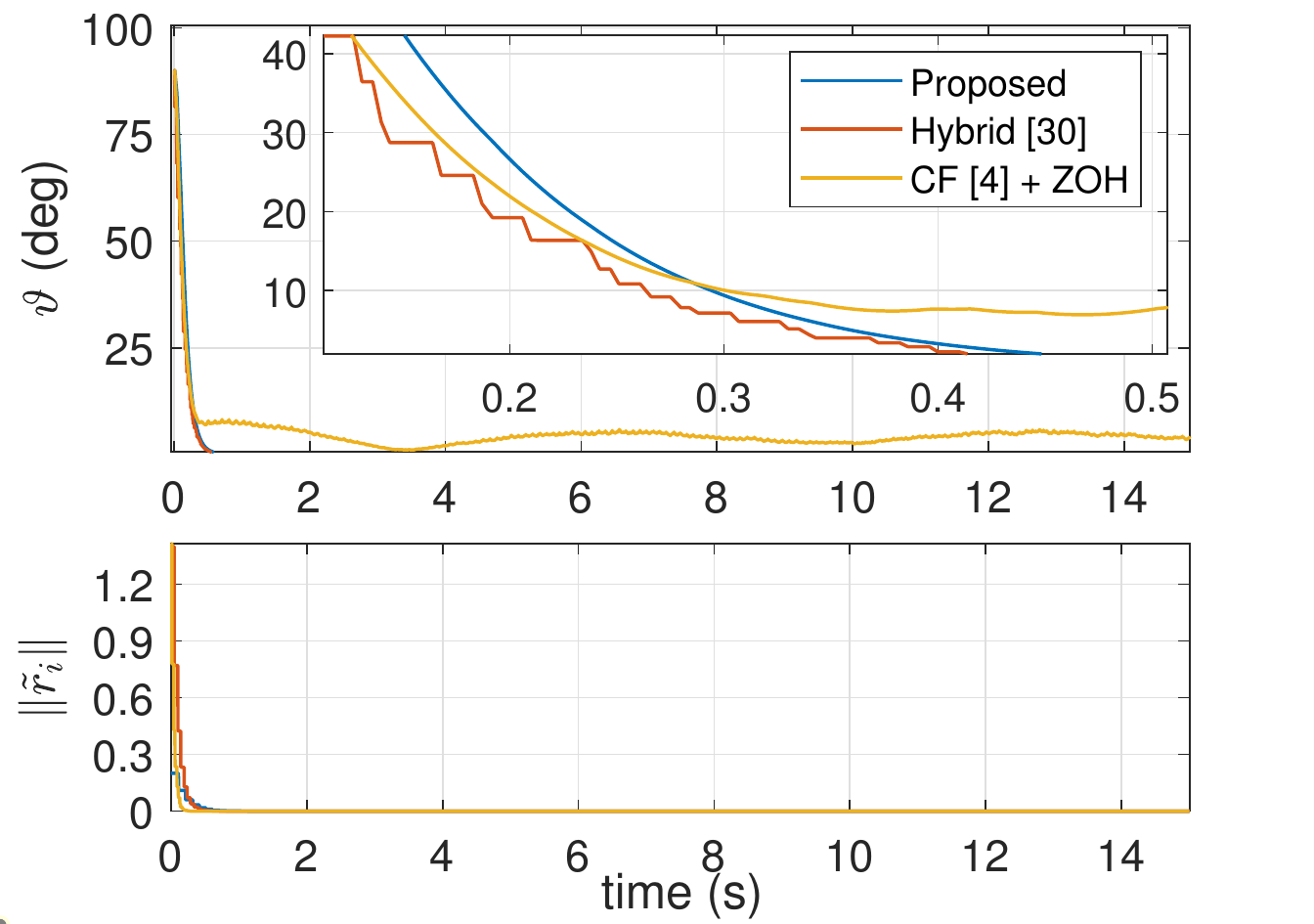}
		\caption{Time evolution of the attitude and vector estimation errors for different attitude observers with noise-free measurements ($\sigma=0$).}
		\label{fig:figure1}
	\end{figure}

	\begin{figure}[!ht]
		\centering
		\includegraphics[width=0.95\linewidth]{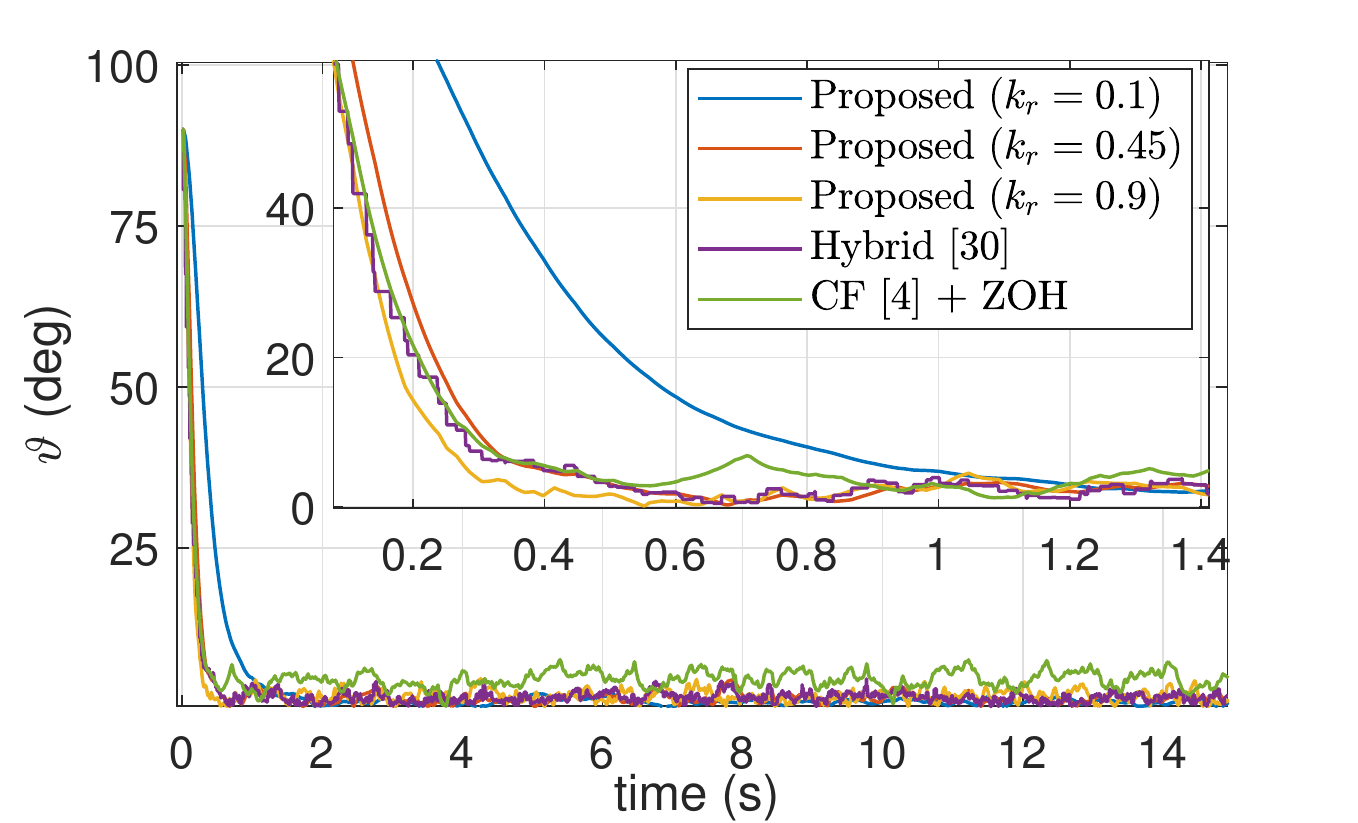}
		\caption{Time evolution of the attitude estimation errors for the proposed attitude observer with different gain parameter $k_r$ in the presence of measurements noise ($\sigma=0.08$).}
		\label{fig:figure2}
	\end{figure}
	
	\begin{table*}[!ht]
		\centering 
		\caption{Averaged Attitude Estimation Error}\label{tab:1}
		\setlength\tabcolsep{5pt} 
		\begin{tabular}{l|c|c|c}   %
			\hline
			& \textbf{Proposed}  & \textbf{Hybrid \cite{berkane2019attitude}} & \textbf{CF \cite{mahony2008nonlinear} + ZOH} \\
			\hline
			\hline
			Test 1 ($\sigma\,{=}\,0, \omega_o\,{=}\,2$)                        & \textbf{  0 deg}   & \textbf{ 0 deg}                            & 4.36 deg                                     \\
			Test 2 ($\sigma\,{=}\,0, \omega_o\,{=}\,5$)                        & \textbf{ 0 deg}    & \textbf{ 0 deg}                            & 11.35 deg                                    \\
			Test 3 ($\sigma\,{=}\,0.08, \omega_o\,{=}\,2$)                     & \textbf{1.92 deg}  & 1.95 deg                                   & 4.40 deg                                     \\
			Test 4 ($\sigma\,{=}\,0.08, \omega_o\,{=}\,5$)                     & \textbf{1.67  deg} & 1.82 deg                                   & 11.28  deg                                   \\
			Test 5 ($\sigma\,{=}\,0, \omega_o\,{=}\,2, f_2\,{\approx}\,10$)    & \textbf{0  deg}    & \textbf{0 deg}                             & 5.92  deg                                    \\
			Test 6 ($\sigma\,{=}\,0.08, \omega_o\,{=}\,2, f_2\,{\approx}\,10$) & {1.91  deg}        & \textbf{1.88 deg}                          & 5.96  deg                                    \\
			\hline
		\end{tabular}
	\end{table*}
	
	\begin{figure*}[!ht]
		\centering
		\includegraphics[width=0.95\linewidth]{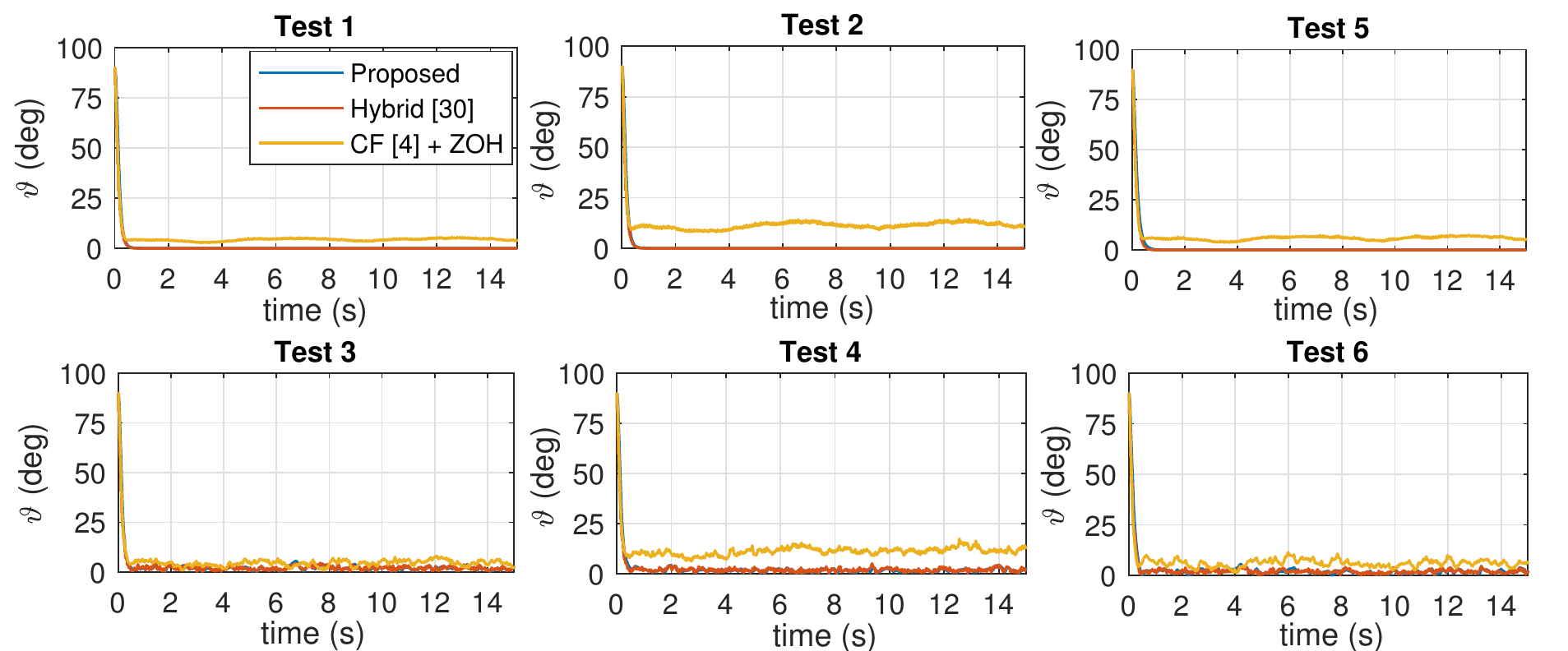}
		\caption{Time evolution of the attitude estimation errors for the attitude observers with different values of the noise covariance, angular velocity amplitude and sampling rate.}
		\label{fig:figure3}
	\end{figure*}

	In the first simulation, as we can see from Fig. \ref{fig:figure1}, all the observers have the same convergence rate in the case noise-free case. Moreover, both our proposed hybrid observer and the `Hybrid  \cite{berkane2019attitude}' guarantee zero estimation error as $t\to \infty$. This indicates that the convergence of the continuous observer, implemented with intermittent measurements, is not guaranteed. In the second simulation, we implemented our hybrid observer with different values of the gain parameter $k_r$ and vector measurements noise with $\sigma=0.08$. From Fig. \ref{fig:figure2}, we can see that large values of $k_r$ lead to fast convergence rates and large estimation errors. In fact, the hybrid dynamics of $\hat{r}_i$ in \eqref{eqn:hybrid_observer0-2} play the role of a low pass filter for the vector measurements and the gain parameter $k_r$ can be tuned based on the measurement noise and the converge speed requirements. In the third simulation, 6 different tests are presented with different values of the noise covariance, angular velocity amplitude and vector measurements sampling rates. The time evolution of the attitude estimation errors for different attitude observers is shown in Fig. \ref{fig:figure3}. Moreover, the averaged attitude estimation errors (after 2 seconds) for each observer are given in Table \ref{tab:1}. As one can see, our proposed hybrid observer and the `Hybrid \cite{berkane2019attitude}' have zero estimation error in the tests without measurement noise, and the static error of the `CF \cite{mahony2008nonlinear} + ZOH' increases when the motion is fast and the sampling rates are low (see Test 1, 2 and 5). We can also notice that our proposed observer provides the best results in most of the tests in the presence of measurement noise except for Test 6.
	
	\subsection{Experiments Results}%
	Our observer has been tested experimentally using Intel RealSense D435i which consists of an RGB-D camera (providing color images and depth images) and an IMU (including a gyroscope and an accelerometer). In our experimental setup, the RGB-D camera provides color and depth images at a frequency of 30 FPS (frames per second) with a resolution of $640 \times 480$ and the gyroscope data output rate is 400 Hz.

	In this test, we made use of the AprilTag markers \cite{wang2016apriltag} and considered the tag corners as landmarks in the environment. In this setup, a single AprilTag marker can provide four (non-aligned) landmarks. Let $p_i^{\mathcal{I}}, i=1,2,3,4$ denote the inertial-frame positions of the four landmarks generated from the corners of an AprilTag marker. Then, five inertial vectors are constructed as follows:
    \begin{align}
    	  r_i = \frac{p_i^{\mathcal{I}}-p_c^{\mathcal{I}}}{\|p_i^{\mathcal{I}}-p_c^{\mathcal{I}}\|},\ \forall i=1,2,3,4, \quad 
    	  r_5  = \frac{r_1 \times r_2}{\|r_1 \times r_2\|}
    \end{align}
	where $p_c^{\mathcal{I}} = \frac{1}{4}\sum p_i^{\mathcal{I}}$ is the center of the tag marker, and the additional vector $r_5$ is orthogonal to the plane of the tag. 
	For the sake of simplicity, we consider the inertial frame $\{\mathcal{I}\}$ attached to the center of the AprilTag marker, and the body frame $\{\mathcal{B}\}$ attached to the camera D435i and aligned with the IMU sensor (see Fig. \ref{fig:diagram_1}). 
	Hence, from the inertial frame setup, one obtains the following five inertial vectors: $
	    r_1 = [
	        {-\sqrt{2}}/{2}~
	        {-\sqrt{2}}/{2}~
	        0]^\top,
	    r_2 = [
	        {-\sqrt{2}}/{2}~
	        {\sqrt{2}}/{2}~
	        0]^\top,
	    r_3 = [
	        {\sqrt{2}}/{2}~
	        {\sqrt{2}}/{2}~
	        0]^\top,  
	    r_4 =[
	        {\sqrt{2}}/{2}~
	        {-\sqrt{2}}/{2}~
	        0]^\top$, $r_5 =[
	        0~
	        0~
	        -1]^\top$. 
	Let $(u_i,v_i)$ and $d_i$ denote the pixel and depth measurements of the $i$-th landmark, respectively (see Fig. \ref{fig:diagram_2}). Then, according to the structure of the RGB-D camera D435i, the body-frame positions of the corners are given by
	\begin{align} \label{eqn:p_i_B}
        p_i^{\mathcal{B}} = d_i \begin{bmatrix}
        (u_i-c_x)/f_x \\
        (v_i-c_y)/f_y \\
        1
        \end{bmatrix},\quad \forall i=1,2,3,4, 
	\end{align}
	where the focal length $(f_x,f_y)$ and the optical center $(c_x,c_y)$ are the internal parameters obtained from camera calibration. 
	Moreover, the corresponding body-frame vector measurements are obtained as follows: 
    \begin{align}\label{eqn:b_i_measure}
    	  b_i = \frac{p_i^{\mathcal{B}}-p_c^{\mathcal{B}}}{\|p_i^{\mathcal{B}}-p_c^{\mathcal{B}}\|}, \  \forall i=1,2,3,4, \quad  
    	    b_5  = \frac{b_1 \times b_2}{\|b_1 \times b_2\|}
    \end{align} 
	with $p_c^{\mathcal{B}} = \frac{1}{4}\sum p_i^{\mathcal{B}}$. The construction of the body-frame vector measurements has been summarized in 
	Algorithm \ref{algo:1}. Moreover, a detailed diagram of our experimental setup is given in Fig. \ref{fig:diagram_3}. 
	
	\begin{algorithm}   \small
	\caption{\small The construction of the body-frame vector measurements from an RGB-D camera}
	\begin{algorithmic}[1]  \label{algo:1}
        \STATE Read the color image;
        \STATE Read the depth image;
        \STATE Align the depth image to the color image;
        \STATE Detect the AprilTag in the image;
        \IF{AprilTag is detected}
            \STATE Obtain the pixels $(u_i,v_i)$ and depths $d_i$ of the tag corners;
            \STATE Compute the 3D positions $p_i^{\mathcal{B}}$ of landmarks using \eqref{eqn:p_i_B};
            \STATE Construct the body-frame vectors using \eqref{eqn:b_i_measure};
        \ENDIF 
    \end{algorithmic}
    \end{algorithm} 
	
	\begin{figure}
		\centering
		\begin{subfigure}{.46\linewidth}
			\centering
			\includegraphics[width=0.95\linewidth]{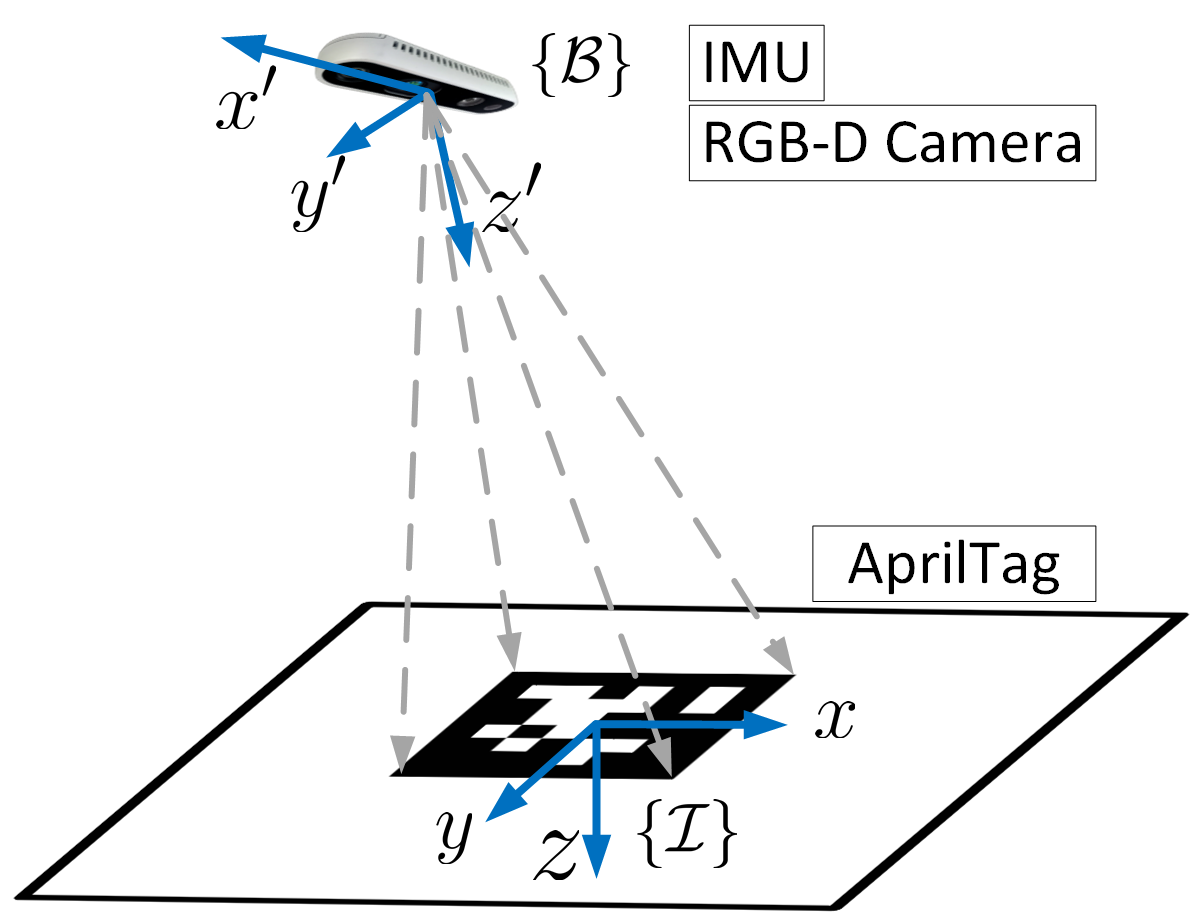}
			\caption{}\label{fig:diagram_1}
		\end{subfigure}
		\begin{subfigure}{.46\linewidth}
			\centering
			\includegraphics[width=0.95\linewidth]{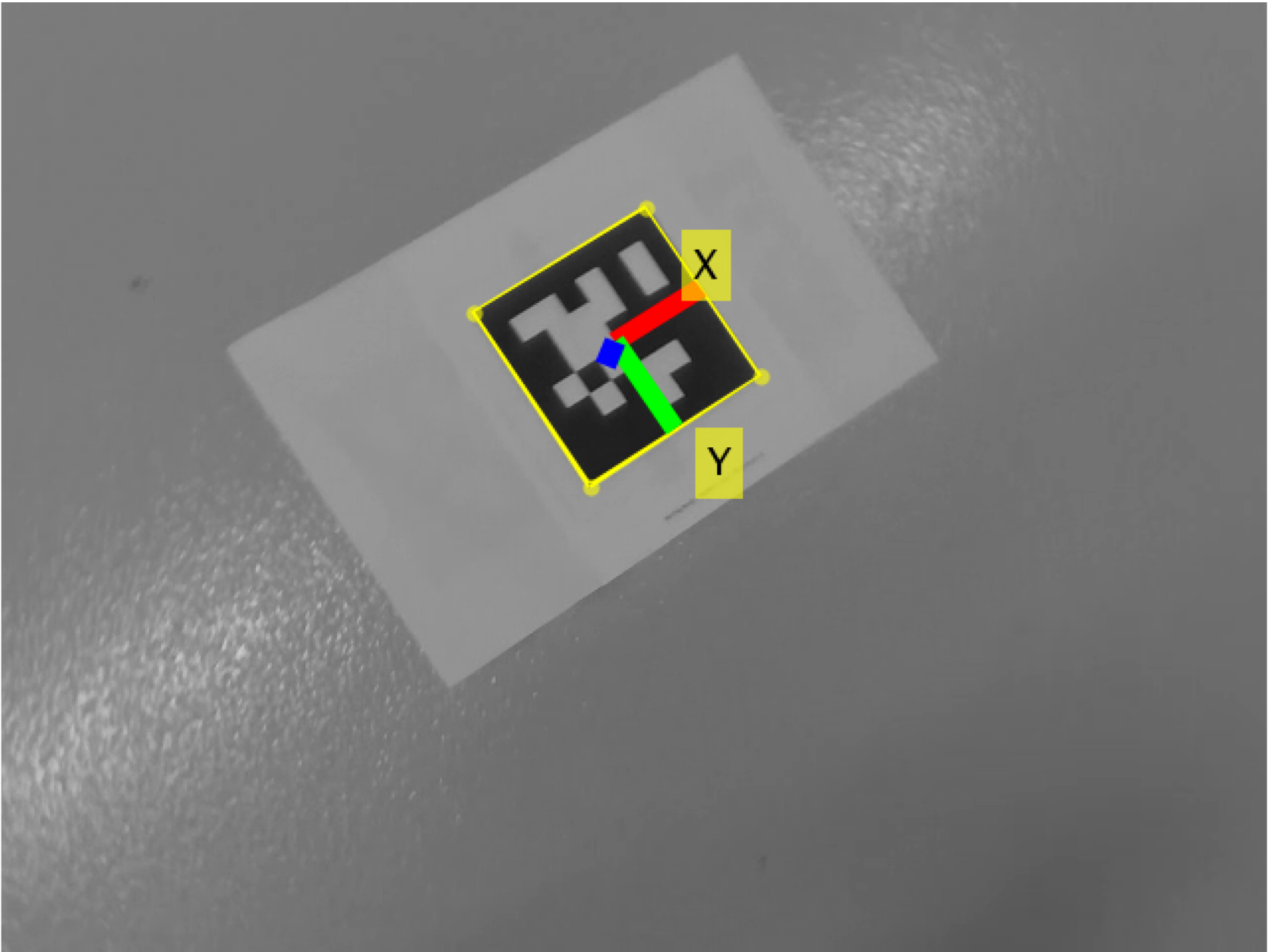}
			\caption{}\label{fig:diagram_2}
		\end{subfigure}
		\\[0.3em] 
		\begin{subfigure}{.94\linewidth}
			\centering
			\includegraphics[width=0.95\linewidth]{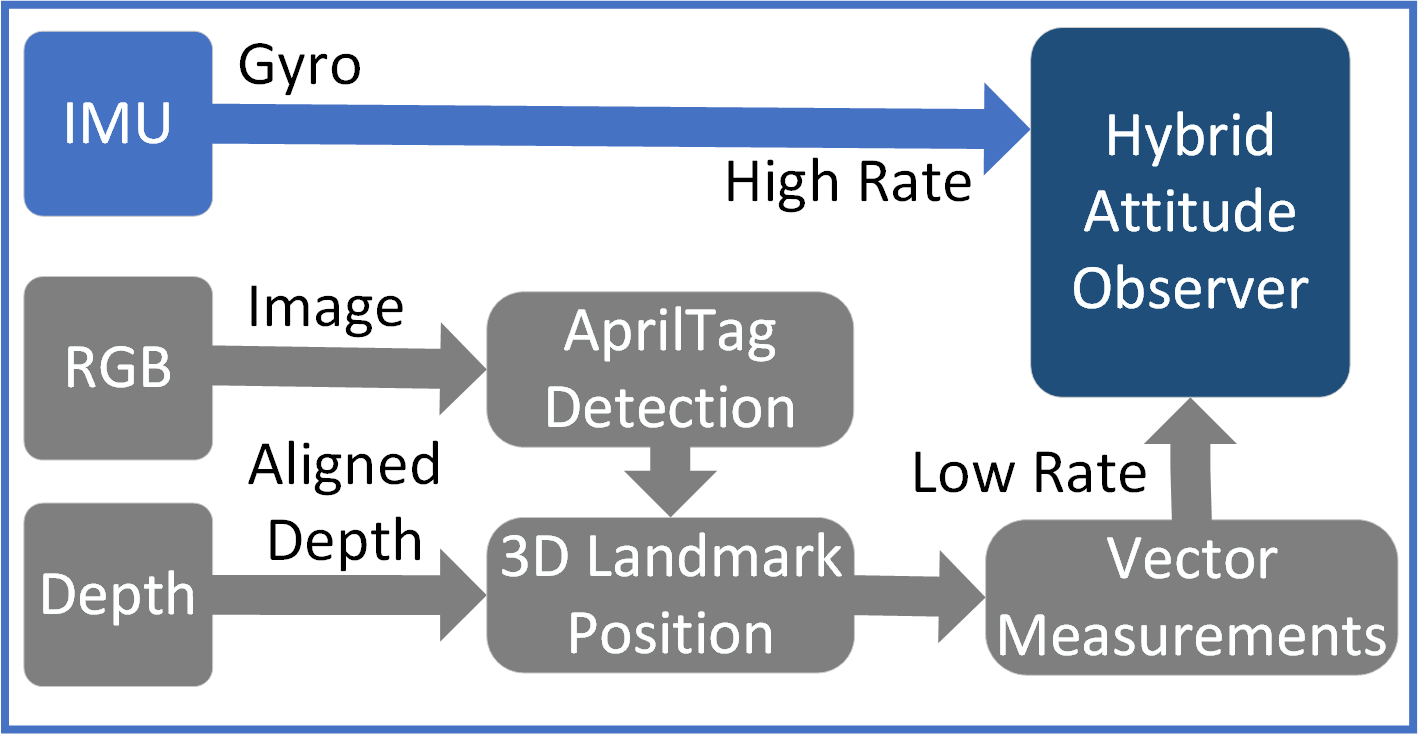}
			\caption{}\label{fig:diagram_3}
		\end{subfigure}
		\caption{The experimental setup and the diagram of the real-time implementation of the proposed hybrid attitude observer using an IMU sensor and an RGB-D camera as well as the AprilTag markers. A video of the experimental test is available at \href{https://youtu.be/9eBqG6p7CXk}{\tt   https://youtu.be/9eBqG6p7CXk}.}
		\label{fig:diagram}
	\end{figure}
	
	The same initial attitude is considered for each observer with $\hat{R}(0) = I_3$ and $\hat{r}_i(0) = r_i$ for all $i=1,2,\dots,5$. The gain parameters are tuned such that all the observers have the similar performance with $\rho_i = (6-i)/15, k_r = 0.5, k_o = 10.5$ for the proposed observer, $\rho_i = 2(6-i)/25$ for the `Hybrid  \cite{berkane2019attitude}' and $k_i= \rho_i, k_P = 9.5$ for the `CF \cite{mahony2008nonlinear} + ZOH'. We also consider a vision-only-based pose estimation method (referred to as `Vision only') using the command {\tt readAprilTag} provided by the Computer Vision Toolbox, which provides relative pose estimation from a single AprilTag of known scale. 
	
	In this experimental test, the data from the RealSense D435i are transmitted to a PC using a USB 3.0 cable. The proposed attitude observer is implemented online using the Simulink Coder (generates and executes C/C++ code from a Simulink model) on an Intel Core i7-3540M running at 3.0 GHz, while the other methods are implemented offline using the collected same data (including gyro, landmark positions and images).  
	Since the ground truth is not available in this experimental setup, alternatively, we consider the following root mean square error (RMSE):
	\begin{align}
		RMSE = \sqrt{\frac{1}{N}\sum\nolimits_{i=1}^N \|\hat{R}\T r_i - b_i\|^2}.
	\end{align} 
	The experimental results are presented in Fig. \ref{fig:realtimeattitude}. It can be observed that the three attitude observers have a similar performance in this experimental test, which is better than the `Vision only' method in terms of noise, due to the nature of filtering by taking into account the information of gyro. The RMSE of each attitude observer starting from a large vale converges, after a few seconds, to the vicinity of zero. Moreover, the zoomed plot in Fig. \ref{fig:realtimeattitude} also shows that the proposed hybrid observer provides smoother attitude estimation compared to the `Hybrid  \cite{berkane2019attitude}' in the presence of measurement noise. This is a benefit from the fact that the proposed observer generates continuous estimates of the attitude.

	\begin{figure}[!ht]
		\centering
		\includegraphics[width=0.95\linewidth]{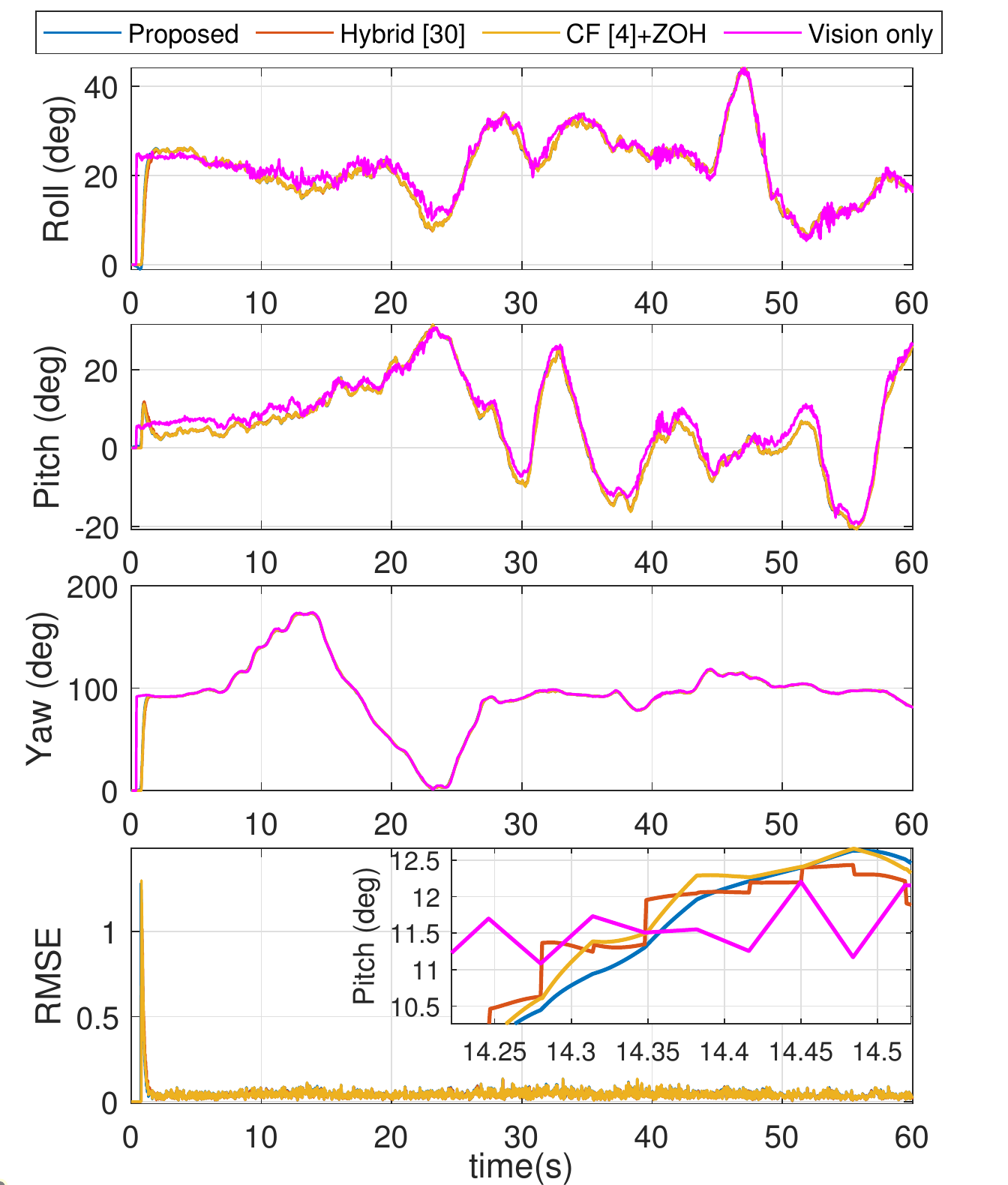}
		\caption{Time evolution of the attitude estimation and estimation errors using the high-rate gyro measurements (400 Hz) and the low-rate RGB-D camera measurements (30 FPS).}
		\label{fig:realtimeattitude}
	\end{figure}

	\section{Conclusion}
	In this work, we addressed the problem of nonlinear attitude estimation with intermittent and multi-rate vector measurements by designing two types of hybrid nonlinear attitude observers.  
	The first hybrid attitude observer, endowed with almost global asymptotic stability, provides continuous estimates of the attitude through the use of a kinematic model on $SO(3)$ driven by continuous gyro measurements and a hybrid input generated from intermittent and multi-rate body-frame vector measurements. 
	The second hybrid attitude observer provides stronger stability results by handling the undesired equilibria through the introduction of an additional switching mechanism motivated from \cite{wang2021hybrid} leading to global asymptotic stability guarantees. To the best of our knowledge, this is the first work dealing with nonlinear smooth attitude estimation, with GAS guarantees, involving intermittent and multi-rate vector measurements. 
	A set of numerical simulation tests show that the proposed hybrid observer exhibits better performance than the continuous nonlinear complementary filter \cite{mahony2008nonlinear} with the ZOH method and the (non-smooth) predict-update hybrid observer of \cite{berkane2019attitude}. To illustrate the real-time performance of the proposed approach, we implemented the proposed observer online using high-rate gyro measurements and low-rate vector measurements constructed from an RGB-D camera.

	\appendices

	\section{Proof of Proposition \ref{prop:tilde_r}} \label{sec:tilde_r}
	From \eqref{eqn:tau}-\eqref{eqn:hybrid_tilde_r}, for each vector $i\in \mathbb{I}$ one obtains the following hybrid system:
	\begin{align}
		\underbrace{\begin{pmatrix}
				\dot{\tilde{r}}_i \\
				\dot{\tau}_i
			\end{pmatrix} = \begin{pmatrix}
				0 \\
				-1
		\end{pmatrix}}_{ \tau_i\in [0,t_M^i]}~~
		\underbrace{\begin{pmatrix}
				{\tilde{r}}_i^+ \\
				{\tau}_i^+
			\end{pmatrix} \in  \begin{pmatrix}
				(1-k_r)\tilde{r}_i \\
				[T_m^i,T_M^i]
		\end{pmatrix}}_{ \tau_i\in \{0\}}  \label{eqn:hybrid_r}.
	\end{align}
	Note that the hybrid system \eqref{eqn:hybrid_r} is autonomous and satisfies the hybrid basic conditions of \cite[Assumption 6.5]{goebel2012hybrid}.
	For each vector $i\in \mathbb{I}$, consider the following Lyapunov function candidate:
	\begin{equation}
		V_r^{i}(\tilde{r}_i,\tau_i) =  e^{\mu \tau_i}   \tilde{r}_i\T \tilde{r}_i  \label{eqn:def_V1}
	\end{equation}
	where $0<\mu<  -\frac{2}{\bar{T}_M}\ln(1-k_r)   $ with $\bar{T}_M  := \max_{i\in \mathbb{I}} T_M^i$. Since $0<k_r<1$, one can easily verify that $-\ln(1-k_r)>0$. For all $\tau_i\in [0,T_M^i]$ (\ie, between two consecutive measurements of the $i$-th vector)  the time-derivative of $V_1$ along the flow of \eqref{eqn:hybrid_r} is given by
	\begin{align}
		\dot{V}_r^i(\tilde{r}_i,\tau_i) & =  - \mu   e^{\mu \tau_i} \tilde{r}_i\T \tilde{r}_i  =  -\mu V_r^i(\tilde{r}_i,\tau_i)   \label{eqn:dot_V1}.
	\end{align}
	When the measurement of the $i$-th vector arrives (\ie, $\tau_i=0$), one has $\tau_i^+\in [T_m^i,T_M^i], \tilde{r}_i^+   = (1-k_r) \tilde{r}_i$ and
	\begin{align}
		V_r^i(\tilde{r}_i^+,\tau_i^+)  
		& = e^{\mu \tau_i^+} (1-k_r)^2  \tilde{r}_i\T \tilde{r}_i  \nonumber             \\
		& \leq e^{\mu T_M^i} (1-k_r)^2  \tilde{r}_i\T \tilde{r}_i  \nonumber             \\
		& \leq e^{\mu \bar{T}_M} (1-k_r)^2   V_r^i(\tilde{r}_i,\tau_i) \label{eqn:V1+}
	\end{align}
	where we made use of $T_M^i \leq \bar{T}_M$ for all $i\in \mathbb{I}$.
	Let $\lambda_J: = e^{\mu \bar{T}_M} (1-k_r)^2$. Using the fact that  $\mu< -\frac{2}{\bar{T}_M}\ln(1-k_r) $, one can verify that $0<\lambda_J = e^{\mu \bar{T}_M } (1-k_r)^2    <1$.
	Hence, one can rewrite \eqref{eqn:V1+} as
	\begin{align}
		{V}_r^i(\tilde{r}_i^+,\tau_i^+) \leq  \lambda_J {V}_r^i(\tilde{r}_i,\tau_i) \leq e^{-\lambda} {V}_r^i(\tilde{r}_i,\tau_i)  \label{eqn:V1^+}
	\end{align}
	where $\lambda: = \min\{-\ln(\lambda_J), \mu\}>0$. Moreover, from \eqref{eqn:dot_V1} and \eqref{eqn:V1^+},  one can conclude that every maximal solution to the hybrid system \eqref{eqn:hybrid_r} is complete and satisfies $V_r^i(\tilde{r}_i(t,j),\tau_i(t,j)) \leq e^{- \lambda(t+j)} V_r^i(\tilde{r}_i(0,0),\tau_i(0,0))$ for all $(t,j) \in \dom (\tilde{r}_i,\tau_i)$. Let $\alpha = e^{\mu \bar{T}_M} $. Using the fact   $ \|\tilde{r}_i\|^2 \leq  V_r^i(\tilde{r}_i,\tau_i) \leq e^{\mu \bar{T}_M} \|\tilde{r}_i\|^2 $ from the definition of $V_r^i$ given in \eqref{eqn:def_V1}, one can further conclude that
	\begin{equation}
		\|\tilde{r}_i(t,j)\|^2 \leq \alpha  e^{- \lambda(t+j)}  \|\tilde{r}_i(0,0)\|^2  \nonumber
	\end{equation}
	which implies that for each $i\in \mathbb{I}$ the vector estimation error  $\tilde{r}_i$  converges (globally exponentially) to zero. This completes the proof.

	\section{Proof of Proposition \ref{prop:x_o}} \label{sec:x_o}
	Define $\mathcal{T}(\tilde{R},\theta): = \tilde{R}\mathcal{R}_u(\theta) \in SO(3)$. For simplicity, if no argument is indicated for $\mathcal{T}$, then it should be understood that $\mathcal{T}=\mathcal{T}(\tilde{R},\theta)$. For the sake of simplicity, let $\nu :=  \gamma  \theta  + 2   u\T\psi(A\mathcal{T})$, then one can show that
	\begin{align}
		\mathcal{R}_u\T (\theta) \sigma (\tilde{R},\theta,\tilde{r})
		& =  \psi(A\mathcal{T}) +    \overline{\Gamma}(\tilde{R},\theta)   \tilde{r}  \label{eqn:sigma_psi1} \\
		\gamma \theta + 2u\T \mathcal{R}_u\T (\theta) \sigma (\tilde{R},\theta,\tilde{r})
		& = \nu +  2   u\T \overline{\Gamma}(\tilde{R},\theta)   \tilde{r}  \label{eqn:sigma_psi2}
	\end{align}
	where $\overline{\Gamma}(\tilde{R},\theta) := \mathcal{R}_u\T (\theta)  \Gamma(\tilde{R},\theta)$. From \eqref{eqn:def_Fo_Jo}, \eqref{eqn:sigma_psi1} and \eqref{eqn:sigma_psi2}, the time-derivatives of $\theta$ and $\mathcal{T}(\tilde{R},\theta)$ are given by
	\begin{align}
		\dot{\theta} & = -k_\theta (\nu +  2   u\T \overline{\Gamma}(\tilde{R},\theta) \tilde{r})                                                                                  \\
		\dot{\mathcal{T}}
		& = 	\tilde{R}(  -k_o \sigma (\tilde{R},\theta,\tilde{r}) )^\times\mathcal{R}_u(\theta) +  \tilde{R} \mathcal{R}_u(\theta) (  \dot{\theta} u)^\times \nonumber \\
		& = 	\mathcal{T} ( - k_o \mathcal{R}_u\T (\theta) \sigma (\tilde{R},\theta,\tilde{r})  + \dot{\theta} u)^\times \nonumber                                      \\
		& = \mathcal{T} ( - k_o (\psi(A\mathcal{T}) + \overline{\Gamma}(\tilde{R},\theta)   \tilde{r} )  - k_\theta  u (\nu +  2   u\T\overline{\Gamma}(\tilde{R},\theta) \tilde{r}))^\times \nonumber                                                 \\
		& = \mathcal{T} ( - k_o \psi(A\mathcal{T}) - k_\theta \nu u -\widehat{\Gamma} (\tilde{R},\theta)   \tilde{r} )^\times
	\end{align}
	where $\widehat{\Gamma} (\tilde{R},\theta) := k_o\overline{\Gamma}(\tilde{R},\theta) + 2k_\theta uu\T \overline{\Gamma}(\tilde{R},\theta)$, 
    and we made use of the facts $\mathcal{R}_u(\theta)=e^{\theta u^\times}$, $\mathcal{R}_u(\theta)u = u$, $ \dot{\mathcal{R}}_u(\theta) =  \mathcal{R}_u(\theta) (\dot{\theta} u)^\times$ and $R\T a^\times R = (R\T a)^\times$ for all $R\in SO(3), a\in \mathbb{R}^3, u\in \mathbb{S}^2$. 
 
	Consider the following real-valued function $V_R: SO(3) \times \mathbb{R} \to \mathbb{R}$:
	\begin{align}
		V_R(x_o) 
		& =  \tr((I_3-\mathcal{T})A) + \frac{\gamma}{2} \theta^2    \label{eqn:defV_R}.
	\end{align}
	It is clear that $V_R(x_o)\geq 0$ for all $x_o\in SO(3)\times \mathbb{R}$ and $V_R(x_o)=0$ if and only if $x_o = (I_3,0)$. Thus, $V_R$ is positive definite with respect to the equilibrium point $(I_3,0)$. Then, the time-derivative of $V_R$ along the flows of \eqref{eqn:hybrid_observer_error}  is given as
	\begin{align}
		\dot{V}_R(x_o)
		& =  \tr(A\mathcal{T} (  k_o \psi(A\mathcal{T}) + k_\theta \nu u  + \widehat{\Gamma} (\tilde{R},\theta)   \tilde{r} )^\times )  \nonumber                            \\
		& \quad -  \gamma  k_\theta  \theta  (\nu +  2   u\T \overline{\Gamma}(\tilde{R},\theta) \tilde{r})   \nonumber                                                      \\
		& =  -2( k_o \psi(A\mathcal{T}) + k_\theta \nu u  + \widehat{\Gamma} (\tilde{R},\theta)   \tilde{r} )\T\psi(A\mathcal{T})   \nonumber                                \\
		& \quad   -  \gamma  k_\theta  \theta \nu    - 2 \gamma  k_\theta  \theta u\T \overline{\Gamma}(\tilde{R},\theta) \tilde{r})   \nonumber                             \\
		& =  -2k_o \|\psi(A\mathcal{T})\|^2 - 2 (\widehat{\Gamma} (\tilde{R},\theta)   \tilde{r} )\T\psi(A\mathcal{T})   \nonumber                                           \\
		& \quad    - k_\theta  \nu (\gamma   \theta + 2u\T \psi(A\mathcal{T}) )    - 2 \gamma  k_\theta  \theta u\T \overline{\Gamma}(\tilde{R},\theta) \tilde{r}  \nonumber \\
		& =  -2k_o \|\psi(A\mathcal{T})\|^2  - 2 (\widehat{\Gamma} (\tilde{R},\theta)   \tilde{r} )\T\psi(A\mathcal{T})   \nonumber                                          \\
		& \quad   - k_\theta  |\nu|^2   - 2 \gamma  k_\theta  \theta u\T \overline{\Gamma}(\tilde{R},\theta) \tilde{r}
	\end{align}
	where we have made use of the facts $\tr(AB)=\tr(BA)$, $\tr(A a^\times) = -2 a\T \psi(A)$  for all $A,B\in \mathbb{R}^{3\times 3}$ and $a\in \mathbb{R}^3$. From the definition of $\nu$, one has $\gamma \theta = \nu - 2u\T \psi(A\mathcal{T}) $. Moreover, from the definitions of $\Gamma(\tilde{R},\theta), \overline{\Gamma}(\tilde{R},\theta)$ and $\widehat{\Gamma} (\tilde{R},\theta) $, one can verify that there exist constants $\overline{c}_{\Gamma}, \widehat{c}_{\Gamma} >0 $ such that
	\begin{align*}
		\|\overline{\Gamma} (\tilde{R},\theta)   \tilde{r}\| \leq \overline{c}_{\Gamma} \|\tilde{r}\|,  \quad  \|\widehat{\Gamma} (\tilde{R},\theta)   \tilde{r}\| \leq \widehat{c}_{\Gamma} \|\tilde{r}\|
	\end{align*}
	for all $(\tilde{R},\theta,\tilde{r}) \in SO(3)\times \mathbb{R} \times \mathbb{R}^{3n}$. Then, the time-derivative of $V_R$ can be rewritten as
	\begin{align}
		\dot{V}_R(x_o)
		& =  -2k_o \|\psi(A\mathcal{T})\|^2  - 2 (\widehat{\Gamma} (\tilde{R},\theta)   \tilde{r} )\T\psi(A\mathcal{T})   \nonumber                 \\
		& \quad   - k_\theta  |\nu|^2   - 2  k_\theta (\nu - 2u\T \psi(A\mathcal{T}) ) u\T \overline{\Gamma}(\tilde{R},\theta) \tilde{r}  \nonumber \\
		& \leq  -2k_o \|\psi(A\mathcal{T})\|^2   - k_\theta  |\nu|^2   + 2  k_\theta \overline{c}_{\Gamma} |\nu| \|\tilde{r} \| \nonumber           \\
		& \quad    + 2(2 k_\theta \overline{c}_{\Gamma}  +  \widehat{c}_\Gamma )\|\psi(A\mathcal{T}) \|  \|\tilde{r} \|   \label{eqn:dotV_R}
	\end{align}
	where we made use of the facts $\|u\|=1$ for all $u\in \mathcal{S}^2$, $x\T y \leq \|x\| \|y\|$, $\|Rx\|=\|x\|$ for all $R\in SO(3), x, y\in \mathbb{R}^3$. Given $\tilde{r}\equiv 0$, one has $	\dot{V}_R(x_o)
	\leq  -2k_o \|\psi(A\mathcal{T})\|^2   - k_\theta  |\nu|^2$. Thus, from \eqref{eqn:dotV_R}, $V_R$ is non-increasing along the flows of \eqref{eqn:hybrid_observer_error} with $\tilde{r}\equiv 0$.
	
	On the other hand, from the definition of $\tilde{\phi}$ given in \eqref{eqn:def_tilde_phi_mu_phi}, one can show that
	\begin{align}
		\tilde{\phi}(x_o,\tilde{r}) -\sum_{i=1}^N \rho_i \|\tilde{r}_i\|^2 \leq V_R(x_o)	 \leq  \tilde{\phi}(x_o,\tilde{r})  +  \sum_{i=1}^N \rho_i \|\tilde{r}_i\|^2  \label{eqn:phi_V_R}
	\end{align}
	where we made use of the facts $ \|(I_3- \tilde{R}\mathcal{R}_u(\theta))\T r_i\|^2 -\|\tilde{r}_i\|^2 \leq   \|(I_3- \tilde{R}\mathcal{R}_u(\theta))\T r_i + (\tilde{R}\mathcal{R}_u(\theta))\T \tilde{r}_i\|^2 \leq \|(I_3- \tilde{R}\mathcal{R}_u(\theta))\T r_i\|^2 + \|\tilde{r}_i\|^2  $ and $\frac{1}{2}\sum_{i=1}^N \rho_i \|(I_3- \tilde{R}\mathcal{R}_u(\theta))\T r_i\|^2 + \frac{\gamma}{2} \theta^2  = \tr((I_3-\mathcal{T}(\tilde{R},\theta)A) + \frac{\gamma}{2} \theta^2 = V_R(x_o) $.
	Then, in view of \eqref{eqn:hybrid_observer_error}-\eqref{eqn:def_Fo_Jo}, \eqref{eqn:defV_R} and \eqref{eqn:phi_V_R}, for each jump in the jump set $\mathcal{J}_o$, one can show that
	\begin{align}
		& V_R(\tilde{R},\theta^+)  - V_R(\tilde{R},\theta) \nonumber                                                                                                                               \\
		& \leq \tilde{\phi}(\tilde{R},\theta^+,\tilde{r}) +  \sum_{i=1}^N \rho_i \|\tilde{r}_i\|^2 -\tilde{\phi}(\tilde{R},\theta,\tilde{r}) +  \sum_{i=1}^N \rho_i \|\tilde{r}_i\|^2 \nonumber    \\
		& =  -\left( \tilde{\phi}(\tilde{R},\theta,\tilde{r}) - \min_{\theta'\in \Theta} \tilde{\phi}(\tilde{R},\theta',\tilde{r})   \right)  +  2 \sum_{i=1}^N \rho_i \|\tilde{r}_i\|^2 \nonumber \\
		& \leq  - \delta   +  2 \sum_{i=1}^N \rho_i \|\tilde{r}_i\|^2  \label{eqn:V_R+}
	\end{align}
	where $  \theta^+ \in  	\{{\theta}'\in \Theta: \mu_{\tilde{\phi}}(\tilde{R},\theta',\tilde{r})=0  \}$ as per \eqref{eqn:def_Fo_Jo}.
	Thus, from \eqref{eqn:V_R+}, $V_R$ has a strict decrease over the jumps of \eqref{eqn:hybrid_observer_error} when $\tilde{r}\equiv 0$. Therefore, given $\tilde{r}\equiv 0$,  one concludes from \eqref{eqn:dotV_R} and \eqref{eqn:V_R+} that the equilibrium point $(I_3,0)$ is stable as per  \cite[Theorem 23]{goebel2009hybrid}, and any maximal solution to \eqref{eqn:hybrid_observer_error} is bounded. Moreover, it follows that $0\leq V_R(x_o(t,j)) \leq V_R(x_o(t_j,j))\leq V_R(x_o(t_j,j-1)) - \delta \leq V_R(x_o(0,0)) - j \delta$ with $(t,j)\succeq(t_j,j)\succeq (t_j,j-1)\succeq (0,0)$ and $(t,j), (t_j,j), (t_j,j-1)\in \dom x_o$. This implies that the number of jumps is finite (\ie, $\lceil \frac{V_R(0,0)}{\delta} \rceil$ with $\lceil \cdot\rceil$ denoting the ceiling function) and depends on the initial conditions.
	
	Next, we show that the equilibrium point $(I_3,0)$ is globally attractive. Applying the invariance principle for hybrid systems given in \cite[Theorem 4.7]{sanfelice2007invariance}, one concludes from \eqref{eqn:dotV_R} and \eqref{eqn:V_R+}  that any solution to the hybrid system \eqref{eqn:hybrid_observer_error} with $\tilde{r}\equiv 0$  must converge to the largest invariant set contained in $\mathcal{W}: = \{x_o=(\tilde{R},\theta)\in SO(3)\times \mathbb{R}|  \psi(A\mathcal{T}) = 0, \nu=0 \}$. Using the facts $\mathcal{T} = \tilde{R}\mathcal{R}_u(\theta)$ and $\nu = \gamma  \theta  + 2   u\T\psi(A\mathcal{T})$, for each $x_o\in \mathcal{W}$, one has $\theta=0$ and $\psi(A\tilde{R})=0$. From $\psi(A\tilde{R})=0$ with a symmetric positive definite matrix $A$ defined in \eqref{eqn:def_A}, one obtains $\tilde{R}\in \{I_3\}\cup \{\tilde{R} = \mathcal{R}_a(\pi, v), v\in \mathcal{E}(A)\}$. Thus, any solution $x_o$ to the hybrid system \eqref{eqn:hybrid_observer_error} with $\tilde{r}\equiv 0$ must converge to the largest invariant set contained in $\mathcal{W}': = \{(I_3,0)\}\cup\{x_o=(\tilde{R},\theta)\in SO(3)\times \mathbb{R}|  \tilde{R}=\mathcal{R}_a(\pi, v), v\in \mathcal{E}(A), \theta=0  \}$. From the definition of $\Psi_A$ defined in \eqref{eqn:def_Psi}, one has $\mathcal{W}'\times \{0\} = \{(I_3,0,0)\}\cup \Psi_A$. From the definitions of the sets $\mathcal{F}_o, \Psi_A$ and the fact $\mu_{\tilde{\phi}}(x_o,\tilde{r}) =   - \min_{\theta'\in \Theta} \tilde{\phi}(\tilde{R},\theta',\tilde{r}) \leq 0 $ as $(x_o,\tilde{r}) = (I_3,0,0)$, one obtains $\{(I_3,0,0)\} \subseteq  \mathcal{F}_o\cap \mathcal{W}'\times \{0\} $ and $\mathcal{F}_o\cap (\mathcal{W}'\times \{0\} \setminus \{(I_3,0,0)\}) = \mathcal{F}_o\cap \Psi_A = \emptyset$. Then, applying some simple set-theoretic arguments, one obtains $\mathcal{F}_o\cap \mathcal{W}'\times \{0\} \subseteq (\mathcal{F}_o\cap (\mathcal{W}'\times \{0\} \setminus \{(I_3,0,0)\})) \cup (\mathcal{F}_o\cap \{(I_3,0,0)\}) = \emptyset \cup   \{(I_3,0,0)\} = \{(I_3,0,0)\}$. Consequently, from the facts $\mathcal{F}_o\cap \mathcal{W}'\times \{0\} \subseteq  \{(I_3,0,0)\}$ and $\{(I_3,0,0)\} \subseteq  \mathcal{F}_o\cap \mathcal{W}'\times \{0\} $, one has $\{(I_3,0,0)\} =  \mathcal{F}_o\cap \mathcal{W}'\times \{0\}$, which implies that $\mathcal{W}'= \{(I_3,0)\}$ from the definition of $\mathcal{W}'$. Moreover, by virtue of \cite[Proposition 6.5]{goebel2012hybrid}, one can verify that every maximal solution to \eqref{eqn:hybrid_observer_error} with $\tilde{r}\equiv 0$ is complete. Hence, one can conclude that the equilibrium point $(I_3,0)$ is globally asymptotically stable for the hybrid system \eqref{eqn:hybrid_observer_error} with $\tilde{r}\equiv 0$. This completes the proof.

	\section{Proof of Theorem \ref{thm:hybrid_observer}}\label{sec:hybrid_observer}
	Define the extended state $x:= (x_o, \tilde{r},\tau) \in \mathcal{S}$.
	From \eqref{eqn:dynamics_R}, \eqref{eqn:tau}, \eqref{eqn:hybrid_observer1}, \eqref{eqn:hybrid_dynamics_design} and \eqref{eqn:error_sigma_R}, one obtains the following hybrid closed-loop system:
	\begin{equation}
		\mathcal{H}_o:
		\begin{cases}
			\dot{x}~~   = F(x),               &  x \in  {\mathcal{F}}                                 \\
			x^+ \,\in  \cup_{i=0}^N   G_i(x), &  x \in  {\mathcal{J}} :=\cup_{i=0}^N  {\mathcal{J}}_i
		\end{cases} 	\label{eqn:Hybrid-Closed-Loop}
	\end{equation}
	where $ {\mathcal{F}}:=\mathcal{F}_o  \times [0,T_M^1] \times \cdots \times [0,T_M^N]$ and 
	$ {\mathcal{J}}_0 = \mathcal{J}_o   \times [0, T_M^1] \times \cdots \times [0, T_M^N]$ and $ {\mathcal{J}}_i:= SO(3)\times \mathbb{R}  \times  \mathbb{R}^{3N} \times [0, T_M^1] \times \cdots  \times [0,T_M^{i-1}]\times \{0\} \times [0, T_M^{i+1}]\times \cdots \times [0,T_M^N]$ for all $i\in \mathbb{I}$, and the flow and jump maps are defined as follows:
	\begin{subequations}
		\begin{align}
			F(x)   & : =  (F_o(x_o,\tilde{r}),
			0_{3N\times 1},
			-1_{N\times 1}) \label{eqn:Hybrid-Closed-Loop-F} \\
			G_0(x) & : =  (G_o(x_o,\tilde{r}),
			\tilde{r},
			\tau )  \label{eqn:Hybrid-Closed-Loop-G-theta}   \\
			G_i(x) & : =  (x_o,
			G^{\tilde{r}}_i(\tilde{r}),
			G^{\tau}_i(\tau) )  \label{eqn:Hybrid-Closed-Loop-G-r}
		\end{align}
	\end{subequations}
	where $F_o(x_o,\tilde{r})$ and $G_o(x_o,\tilde{r})$ are defined in \eqref{eqn:def_Fo_Jo}, and
	\begin{align}
		G^{\tilde{r}}_i(\tilde{r}) & :=  \begin{medsize} \begin{bmatrix}
				\tilde{r}_1\T,\cdots, \tilde{r}_{i-1}\T, (1-k_r)\tilde{r}_i\T,\tilde{r}_{i+1}\T,\cdots,\tilde{r}_N\T
			\end{bmatrix}\T  \end{medsize} \\
		G^{\tau}_i(\tau)           & :=   \begin{medsize} \begin{bmatrix}
				\tau_1,\cdots, \tau_{i-1}, [T_m^i,T_M^i],\tau_{i+1},\cdots,\tau_N
			\end{bmatrix}\T \end{medsize}.
	\end{align}
	From \eqref{eqn:hybrid_tilde_r} and \eqref{eqn:hybrid_dynamics_theta}, one has $x^+\in G_i(x)$ if $x\in  {\mathcal{J}}_i$ for each $i\in \mathbb{I}_{\geq 0}: = \{0,1,2,\dots,N\}$. Note that in the hybrid dynamics \eqref{eqn:Hybrid-Closed-Loop}, $x$ could belong concurrently to multiple sub-jump-sets $ {\mathcal{J}}_i$ at a given time if, for instance, two or more measurements arrive at this instant of time (\ie, $t^i_k = t^j_{k'}$ with $i\neq j, k,k'\in \mathbb{N}$). In this case, for each $x\in \mathcal{J}$, define $\mathcal{I}(x): = \{i\in \mathbb{I}_{\geq 0}:  x\in \mathcal{J}_i\}$ as the nonempty index set. Then, one has $x\in \cup_{i\in \mathcal{I}(t)} \mathcal{J}_i$ and $x^+ \in \cup_{i\in \mathcal{I}(x)} G_i(x)$. Note that the flow set $\mathcal{F}$ and jump set $\mathcal{J}$ are both closed, and $\mathcal{F} \cup \mathcal{J} = \mathcal{S}$. Note also that with the introduction of the virtual timers $\tau_i$, the hybrid system $\mathcal{H}_o$ is autonomous and satisfies the hybrid basic conditions of \cite[Assumption 6.5]{goebel2012hybrid}.

	Consider the following Lyapunov function candidate: 
	\begin{align}
		V(x) & = V_R(x_o)  + \varepsilon \sum_{i=1}^N V_r^{i}(\tilde{r}_i,\tau_i) \label{eqn:def_V}
	\end{align}
	where $\varepsilon>0$. From \eqref{eqn:dot_V1} and \eqref{eqn:dotV_R}, the time-derivative of $V$ along the flows of \eqref{eqn:Hybrid-Closed-Loop} is given as
	\begin{align}
		\dot{V}(x) & \leq -2k_o \|\psi(A\mathcal{T})\|^2   - k_\theta  |\nu|^2   + 2  k_\theta \overline{c}_{\Gamma} |\nu| \|\tilde{r} \| \nonumber                             \\
		& \quad    + 2(2 k_\theta \overline{c}_{\Gamma}  +  \widehat{c}_\Gamma )\|\psi(A\mathcal{T}) \|  \|\tilde{r} \|  -\mu  \varepsilon \|\tilde{r}\|^2 \nonumber \\
		& \leq    -k_o \|\psi(A\mathcal{T})\|^2   - \frac{1}{2}k_\theta   \nu^2    - c_\varepsilon \|\tilde{r}\|^2  \label{eqn:dot_V}
	\end{align}
	where  $c_\varepsilon: = \mu  \varepsilon  - \frac{(2 k_\theta \overline{c}_{\Gamma}  +  \widehat{c}_\Gamma )^2}{k_o}- 2k_\theta \overline{c}_{\Gamma}^2$, and we used $\mathcal{T}=  \tilde{R}\mathcal{R}_u(\theta)$, $e^{\mu \tau_i} \geq 1$ for all $\tau_i\in [0,T_M^i]$, and the inequality $2b\|x\|\|y\| \leq a \|x\|^2 + \frac{b^2}{a} \|y\|^2$ for all $a, b>0, x,y\in \mathbb{R}^n$. To ensure $c_\varepsilon>0$, it is sufficient to choose the constant scalar $\varepsilon$ large enough such that $\varepsilon > \frac{1}{\mu} (\frac{(2 k_\theta \overline{c}_{\Gamma}  +  \widehat{c}_\Gamma )^2}{k_o} + 2k_\theta \overline{c}_{\Gamma}^2)$ with $\mu$ chosen as per Appendix \ref{sec:tilde_r}. Thus, $\dot{V}(x)$ is negative semi-definite and $V(x)$ in non-increasing along the flows of \eqref{eqn:Hybrid-Closed-Loop}.

	On the other hand, we consider the case where the state is in the jump set, \ie,  $x\in \mathcal{J}$. In this case, we have the following three scenarios:
	\begin{itemize}
		\item [1)] The scenario where only the jumps of the intermittent measurements are involved. One has $x\in \cup_{i\in \mathcal{I}(x)} \mathcal{J}_i$ with $\mathcal{I}(x) \subseteq \mathbb{I}$ and for each jump $x^+\in \cup_{i\in \mathcal{I}(x)} G_i(x)$, which implies that   $x_o^+ = x_o$ and
		\begin{align*}
			\begin{cases}
				\tilde{r}_i^+ = (1-k_r) \tilde{r}_i, \tau_i^+ \in [T_m, T_M], &  i \in \mathcal{I}(x)    \\
				\tilde{r}_i^+ =  \tilde{r}_i, \tau_i^+ = \tau_i,              &  i \notin \mathcal{I}(x)
			\end{cases}
		\end{align*}
		In view of \eqref{eqn:dot_V1}, \eqref{eqn:dotV_R} and \eqref{eqn:def_V}, one can show that
		\begin{align}
			V(x^+) - V(x) & = V_R(x_o^+)  + \varepsilon \sum_{j=1}^N V_r^{j}(\tilde{r}_j^+,\tau_j^+)  \nonumber                                   \\
			& \quad -   V_R(x_o)  - \varepsilon \sum_{j=1}^N V_r^{j}(\tilde{r}_j,\tau_j)  \nonumber                                 \\
			& \leq \varepsilon \sum_{i\in \mathcal{I}(x)}(V_r^{i}(\tilde{r}_i^+,\tau_i^+) -V_r^{i}(\tilde{r}_i,\tau_i) ) . \nonumber
		\end{align}
		From \eqref{eqn:V1^+}, one has $V_r^i(\tilde{r}_i^+,\tau_i^+) \leq \lambda_J V_r^i(\tilde{r}_i,\tau_i)$ with $\lambda_J = e^{\mu \bar{T}_M} (1-k_r)^2 < 1$. Hence, one can further show that
		\begin{align}
			V(x^+) - V(x)
			& \leq -\varepsilon (1-\lambda_J) \sum_{i\in \mathcal{I}(x)} V_r^{i}(\tilde{r}_i,\tau_i)    \label{eqn:V+1}.
		\end{align}
		Since $0<\lambda_J<1$ and $V_r^{i}(\tilde{r}_i,\tau_i)\geq 0$ for $i\in \mathbb{I}$, one can conclude that $V(x)$ is non-increasing after each jump at $x\in \cup_{i\in \mathcal{I}(x)} \mathcal{J}_i$ with $\mathcal{I}(x) \subseteq \mathbb{I}$.

		\item [2)]   The scenario where only the jump of $\theta$ is involved. One has $x\in  \mathcal{J}_0$ which implies that $x^+\in G_0(x)$, \ie, $x_o^+ \in G_o(x_o,\tilde{r})$ with $G_o(x_o,\tilde{r})$ defined in \eqref{eqn:def_Fo_Jo} and $\tilde{r}^+ = \tilde{r}, \tau^+ = \tau$. From  \eqref{eqn:V_R+}, \eqref{eqn:Hybrid-Closed-Loop} and \eqref{eqn:def_V} one can show that
		\begin{align}
			V(x^+) - V(x)
			& = V_R(G_o(x_o,\tilde{r}))  - V_R(x_o) \nonumber                             \\
			& = V_R(\tilde{R},\theta^+)  - V_R(\tilde{R},\theta) \nonumber                \\
			& \leq  - \delta   +  2 \sum_{i=1}^N \rho_i \|\tilde{r}_i\|^2 \label{eqn:V+2}
		\end{align}
		where $  \theta^+ \in  	\{{\theta}'\in \Theta: \mu_{\tilde{\phi}}(\tilde{R},\theta',\tilde{r})=0  \}$ as per \eqref{eqn:def_Fo_Jo}. Then, one has $V(x^+) \leq V(x)$ if $\sum_{i=1}^N \rho_i \|\tilde{r}_i\|^2 \leq \frac{\delta}{2}$.
		
		\item [3)] The scenario where both the jumps of the intermittent measurements and the switching variable $\theta$ are involved (\ie, the combination of above two scenarios, $x\in \cup_{i\in \mathcal{I}(x)} \mathcal{J}_i$ with $\mathcal{I}(x) \subseteq \mathbb{I}_{\geq 0}$ and $0\in \mathcal{I}(x)$). From \eqref{eqn:V+1} and \eqref{eqn:V+2}, one has
		\begin{align}
			V(x^+) - V(x)
			& \leq  -\varepsilon (1-\lambda_J) \sum_{i\in \mathcal{I}(x)\setminus \{0\}} V_r^{i}(\tilde{r}_i,\tau_i) \nonumber \\
			& \qquad  + V_R(G_o(x_o,\tilde{r}))  - V_R(x_o)  \nonumber                                                         \\
			& \leq  - \delta   +  2 \sum_{i=1}^N \rho_i \|\tilde{r}_i\|^2 \label{eqn:V+3}
		\end{align}
		which implies that $V(x^+) \leq V(x)$ if $\sum_{i=1}^N \rho_i \|\tilde{r}_i\|^2 \leq \frac{\delta}{2}$.
	\end{itemize}
 
	Next, we are going to show that the set $\mathcal{A}$ is stable. We first show that, for each constant $0<c_\delta<\frac{\delta}{2}$, there exist  a finite time $T_0 \geq 0$ and a finite number of jump $J_0\geq 0$ with $(T_0,J_0)\in \dom x$ such that $\sum_{i=1}^N \rho_i \|\tilde{r}_i(t,j)\|^2 \leq c_\delta$ for all $(t,j)\in \dom x, (t,j) \succeq (T_0,J_0)$ (\ie, $ t\geq T_0, j\geq J_0$). From \eqref{eqn:tilde_r_exp}, one can verify that, for each  $i\in \mathbb{I}$, $\|\tilde{r}_i(t,j)\|^2 \leq e^{\mu \bar{T}_M}  e^{- \lambda t}  \|\tilde{r}_i(0,0)\|^2 $ for all $(t,j)\in \dom x$.  Hence, for each $0<c_\delta<\frac{\delta}{2}$, there exists a finite time $T_0 \geq 0$ such that $\sum_{i=1}^N \rho_i \|\tilde{r}_i(t,j)\|^2 \leq c_\delta$ for all $(t,j)\in \dom x$ and $t \geq T_0$. Then, it remains to show that there is a finite number of jumps in the time interval $[0,T_0]$. 
	Note that the number of jumps in the set $\cup_{i\in \mathbb{I}}\mathcal{J}_i$ is finite (bounded by $T_0/\min\{T_m^1,\dots,T_m^N\}$) in the time interval $[0,T_0]$ by Assumption \ref{assum:intermittent}. 
	In view of \eqref{eqn:tilde_r_exp}, \eqref{eqn:def_V}-\eqref{eqn:V+1} and the fact $V_R(G_o(x_o)) = \tr((I_3-\mathcal{T}(\tilde{R},\theta^+)A) + \frac{\gamma}{2} (\theta^+)^2 \leq 4\lambda_M^A + \frac{\gamma}{2} \max_{\theta'\in \Theta} |\theta'|^2$ with $\lambda_M^A$ denoting the maximum eigenvalue of $A$, one can verify that $V(x(t,j))$ is   upper bounded for all $(t,j)\in \dom x$. Using the following inequalities modified from \eqref{eqn:phi_V_R}:
	\begin{align}
		\tilde{\phi}(x_o,\tilde{r})
		& \leq V_R(x_o) + \sum_{i=1}^N \rho_i \|\tilde{r}_i\|^2 \nonumber                                                  \\
		& \leq V_R(x_o) +  \varepsilon \sum_{i=1}^N \frac{\rho_i}{\varepsilon} e^{\mu \tau_i}\|\tilde{r}_i\|^2   \nonumber \\
		& \leq c_{\phi} V(x)
	\end{align}
	with $c_\phi: = \max\{1,\frac{\rho_1}{\varepsilon},\dots, \frac{\rho_N}{\varepsilon}\}$, one can also show that $\tilde{\phi}(x_o,\tilde{r})$ is upper bounded for all $(t,j)\in \dom x$.	Moreover, since $\tilde{\phi}(x_o,\tilde{r})$ is non-negative and has a strict decrease after each jump in the set $\mathcal{J}_0$ by the definition of the jump set $\mathcal{J}_0$  in \eqref{eqn:def_Fo_Jo}, the number of consecutive jumps in the set $\mathcal{J}_0$ is finite (\ie, no Zeno behavior). 
	Hence, the number of jumps in the jump set $\mathcal{J}$ is finite and there is no finite escape-time in the time interval $[0,T_0]$, \ie, there exists a finite number of jumps $J_0$ such that $(T_0,J_0)\in \dom x$ and $\sum_{i=1}^N \rho_i \|\tilde{r}_i(t,j)\|^2 \leq c_\delta$ for all $(t,j)\in \dom x, (t,j) \succeq (T_0,J_0)$.	Then, for all $(t,j) \in \dom x, (t,j) \succeq (T_0,J_0)$, one obtains from \eqref{eqn:V_R+} that
	\begin{align}
		 V_R(G_o(x_o,\tilde{r}))  - V_R(x_o)  
		&   \leq  - \delta   +  2 \sum_{i=1}^N \rho_i \|\tilde{r}_i\|^2 \nonumber \\
            &   \leq 2 c_\delta - \delta < 0  \label{eqn:V_R_delta}.
	\end{align} 
	This implies that, after each jump in the set $\mathcal{J}_0$, $V(x)$ is bounded for all $(t,j) \in \dom x, (t,j) \preceq (T_0,J_0)$ and has a strict decrease  for all $(t,j) \in \dom x, (t,j) \succeq (T_0,J_0)$.
	Substituting \eqref{eqn:V_R_delta} into \eqref{eqn:V+2} and \eqref{eqn:V+3} and using the results in \eqref{eqn:def_V} and  \eqref{eqn:V+1}, one can show that  $V(x)$ is bounded for all $(t,j) \in \dom x, (t,j) \preceq (T_0,J_0)$ and non-increasing for all $(t,j) \in \dom x, (t,j) \succeq (T_0,J_0)$.  
	It is important to point out that $T_0=J_0=0$ if the initial conditions are small enough (\ie, $\sum_{i=1}^N \rho_i \|\tilde{r}_i(0,0)\|^2  <\frac{\delta}{2}$).
	Therefore, one can conclude that the set $\mathcal{A}$ is stable, and any maximal solution to \eqref{eqn:Hybrid-Closed-Loop} is bounded.

	Now, we are going to show that the set $\mathcal{A}$ is globally attractive, \ie,  every maximal solution $x$ to   \eqref{eqn:Hybrid-Closed-Loop}  is complete and satisfies $\lim_{t+j\to \infty}|x(t,j)|_{\mathcal{A}} = 0$ for all $(t,j)\in \dom x$. 
	Define the following nonempty set:
	\begin{equation}
		\mathcal{U}: = \left\{x= (x_o, \tilde{r},\tau)\in \mathcal{S} \left| \sum_{i=1}^N \rho_i \|\tilde{r}_i\|^2  < c_\delta \right. \right\}
	\end{equation}
	with some $0<c_\delta<\frac{\delta}{2}$ and $(T_0,J_0) \in \dom x$. Note that the closed-loop system \eqref{eqn:Hybrid-Closed-Loop} satisfies the hybrid basic condition \cite[Assumption 6.5]{goebel2012hybrid}, $F(x)\subset T_{\mathcal{F}}(x)$ for any $x \in \mathcal{F}\setminus \mathcal{J}$  with $T_{\mathcal{F}}(x)$ denoting the tangent cone to $\mathcal{F}$ at point $x$, $G(\mathcal{J}) \subset \mathcal{F}\cup \mathcal{J} = \mathcal{S}$, and every maximal solution to \eqref{eqn:Hybrid-Closed-Loop} is bounded. Therefore, by virtue of \cite[Proposition 6.10]{goebel2012hybrid}, every maximal solution to \eqref{eqn:Hybrid-Closed-Loop} is complete. Thus, every maximal solution $x$, since complete and bounded,  is a compact hybrid trajectory satisfying  
	$$
	\overline{\{x(t,j)| (t,j)\in \dom x, (t,j) \succeq (T_0,J_0)\}} \subset \mathcal{U}.
	$$
	
	We consider the following functions motivated from \cite{sanfelice2007invariance}
	\begin{align}
		u_C(x) := \begin{cases}
			-k_o \|\psi(A\mathcal{T})\|^2   - \frac{1}{2}k_\theta   \nu^2    - c_\varepsilon \|\tilde{r}\|^2 , &  x\in \mathcal{F} \\
			-\infty,                                                                                           & \text{otherwise}
		\end{cases} \label{eqn:def_u_C}
	\end{align}
	and
	\begin{align}
		u_D(x) := \begin{cases}
			\max_{i\in \mathcal{I}(x)} \{V(G_i(x)) - V(x)\}, &  x\in \mathcal{J} \\
			-\infty,                                         & \text{otherwise}
		\end{cases} \label{eqn:def_u_D}
	\end{align}
	with $\mathcal{I}(x): = \{i\in \mathbb{I}_{\geq 0}: x\in \mathcal{J}_i\}$. It is clear from \eqref{eqn:dot_V} that $u_C(x) \leq 0$ for all $x\in \mathcal{U}$, and from \eqref{eqn:V+1}-\eqref{eqn:V+3} and \eqref{eqn:V_R_delta}  that $u_D(x)\leq 0$ for all $x\in \mathcal{U}$.  
	By virtue of \cite[Theorem 4.7]{sanfelice2007invariance}, every maximal solution $x$ to the hybrid  system \eqref{eqn:Hybrid-Closed-Loop}  must converge to the largest weakly invariant set contained in
	\begin{equation*}
		V^{-1}(r) \cap \mathcal{U} \cap [u_C^{-1}(0) \cup (u_D^{-1}(0) \cap G(u_D^{-1}(0)))]
	\end{equation*}
	for some constant $r\in V(\mathcal{U})$. Applying simple set-theoretic arguments, one obtains $V^{-1}(r) \cap \mathcal{U} \cap [u_C^{-1}(0) \cup (u_D^{-1}(0) \cap G(u_D^{-1}(0)))] =  [V^{-1}(r) \cap \mathcal{U} \cap 	u_C^{-1}(0)] \cup [V^{-1}(r) \cap \mathcal{U} \cap \left( u_D^{-1}(0) \cap  G(u_D^{-1}(0))\right)]$. From \eqref{eqn:V+1}-\eqref{eqn:V+3} and the property of the jump set $\mathcal{J}_0$ (\ie, $\mathcal{J}_o$) in \eqref{eqn:V_R_delta}, one can show that $ 
	u_D^{-1}(0) \cap  G(u_D^{-1}(0))= \emptyset
	$. Moreover, from \eqref{eqn:def_u_C}, one has
	\begin{align*}
		& V^{-1}(r) \cap \mathcal{U} \cap 	u_C^{-1}(0)                                                          \\
		& = V^{-1}(r) \cap \mathcal{U} \cap  \{x \in \mathcal{F}:\psi(A\mathcal{T})=0,\tilde{r}=0, \nu   = 0\} \\
		& = \{x\in \mathcal{F}:\psi(A\mathcal{T})=0,\tilde{r}=0,   \theta  = 0\}                               \\
		& = \{x \in \mathcal{F}:\psi(A\tilde{R})=0,\tilde{r}=0,   \theta  = 0\}
	\end{align*}
	where we used the facts $\nu = \gamma  \theta  + 2   u\T\psi(A\mathcal{T}) $, $\mathcal{T} = \tilde{R}\mathcal{R}_u(\theta)$. Since $\mathcal{F} =  \mathcal{F}_o  \times [0,T_M^1] \times \cdots \times [0,T_M^N] =   \{x\in \mathcal{S} : \mu_{\tilde{\phi}}(\tilde{R},\theta,\tilde{r}) \leq \delta \}$, one can conclude from Lemma \ref{lem:Psi} and the definition of the set $\mathcal{A}$ that
	\begin{align*}
		& V^{-1}(r) \cap \mathcal{U} \cap 	u_C^{-1}(0)                                                                                     \\
		& = \{x \in \mathcal{S}:\mu_{\tilde{\phi}}(\tilde{R},\tilde{r},\theta) \leq \delta, \psi(A\tilde{R})=0, \theta =0, \tilde{r}=0 \} \\
		& = \{(\tilde{R},\theta,\tilde{r},\tau) \in \mathcal{S}: \tilde{R}=I_3, \theta=0,\tilde{r}=0 \}                   \\
        &= \mathcal{A}.
	\end{align*} 
	Therefore, one can conclude that every maximal solution, since complete and bounded, must converge to the set $\mathcal{A}$, which implies that the set $\mathcal{A}$ is globally asymptotically stable for the hybrid system \eqref{eqn:Hybrid-Closed-Loop}. This completes the proof.

	\bibliographystyle{IEEEtran}
	\bibliography{mybib} 
	
\end{document}